\newcommand{\Reals}{\mathbb{R}}
\newcommand{\Naturals}{\mathbb{N}}
\def\andfrac#1/#2{%
   \leavevmode\kern.1em
   \raise.5ex\hbox{\the\scriptfont0 #1}\kern-.1em
   /\kern-.15em\lower.25ex\hbox{\the\scriptfont0 #2}}
\def\WLMax{\ensuremath{\mathrm{WL}_\infty}\xspace}
\def\WLEuc{\ensuremath{\mathrm{WL}_2}\xspace}
\def\WLMan{\ensuremath{\mathrm{WL}_1}\xspace}
\def\WS{\ensuremath{\mathrm{WS}}\xspace}
\def\WBV{\ensuremath{\mathrm{WBV}}\xspace}
\def\WBS{\ensuremath{\mathrm{WBS}}\xspace}
\def\tail{\ensuremath{\mathrm{tail}}\xspace}
\def\head{\ensuremath{\mathrm{head}}\xspace}
\def\bbox{\ensuremath{\mathrm{bbox}}\xspace}
\newtheorem{lemma}{Lemma}
\newtheorem{theorem}{Theorem}
\newenvironment{proof}{Proof:}{\qed}
\def\squareforqed{\hbox{\rlap{$\sqcap$}$\sqcup$}}
\def\qed{\ifmmode\squareforqed\else{\unskip\nobreak\hfil
\penalty50\hskip1em\null\nobreak\hfil\squareforqed
\parfillskip=0pt\finalhyphendemerits=0\endgraf}\fi}
\long\def\skipsection#1{}
\begin{document}

\title{An inventory of three-dimensional Hilbert space-filling curves}
\author{%
Herman~Haverkort\thanks{Dept.\ of Computer Science, Eindhoven University of Technology, the Netherlands, cs.herman@haverkort.net}
}
\date{September 13, 2011}
\maketitle

\begin{abstract}
Hilbert's two-dimensional space-filling curve is appreciated for its good locality properties for many applications. However, it is not clear what is the best way to generalize this curve to filling higher-dimensional spaces. We argue that the properties that make Hilbert's curve unique in two dimensions, are shared by 10\,694\,807 structurally different space-filling curves in three dimensions. These include several curves that have, in some sense, better locality properties than any generalized Hilbert curve that has been considered in the literature before.
\textbf{I recommend to first read Appendix~\ref{apx:update}: Update and erratum.}
\end{abstract}

\section{Introduction}
A space-filling curve in $d$ dimensions is a continuous, surjective mapping from $\Reals$ to $\Reals^d$.
In the late 19th century Peano~\cite{Peano} described such mappings for $d = 2$ and $d = 3$. Since then, quite a number of space-filling curves have appeared in the literature, and space-filling curves have been applied in diverse areas such as spatial databases, load balancing in parallel computing, improving cache utilization in computations on large matrices, finite element methods, image compression, and combinatorial optimization.

Space-filling curves are usually recursive constructions that map the unit interval $[0,1]$ to a subset of $\Reals^d$ that has measure larger than zero. In the case of Peano's two-dimensional curve, the subset of $\Reals^d$ that is ``filled'' is the unit square. Peano's curve is based on subdividing this unit square into a grid of $3 \times 3$ square cells, and simultaneously subdividing the unit interval into nine subintervals. Each subinterval is then matched to a cell; thus Peano's curve traverses the cells one by one in a particular order. The procedure is applied recursively to each subinterval-cell pair, so that within each cell, the curve makes a similar traversal (see Figure~\ref{fig:peano2d}(a)). The result is a fully-specified mapping from the unit interval to the unit square. A mapping from $\Reals$ to $\Reals^d$ could then be constructed by inverting the recursion, recursively considering the unit interval and the unit square as a subinterval and a cell of a larger interval and a larger square.

It is straightforward to generalize the definition of Peano's two-dimensional curve, based on a subdivision of a square into nine squares, to a three-dimensional curve, based on a subdivision of a cube into 27 cubes, or even a $d$-dimensional curve, based on a subdivision of a hypercube into $3^d$ hypercubes. Peano's curve has been the curve of choice for certain applications. However, in many other applications preference is given to curves based on subdividing squares (or hypercubes) into only $2^d$ squares (or hypercubes). The cell in which a given point $p$ lies can then be determined by inspecting the binary representations of the coordinates of $p$ bit by bit, and no divisions by three need to be computed. 

\begin{figure}
\centering
\hbox to \hsize{\hfill
(a)~\includegraphics[width=0.4\hsize]{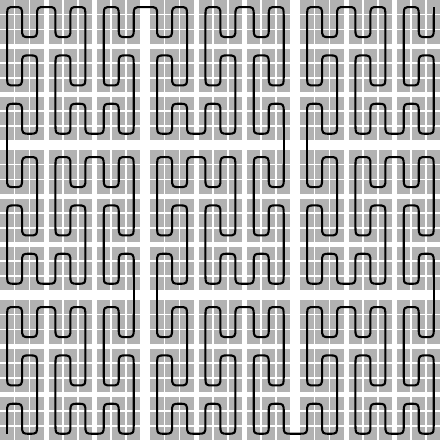}\hfill
(b)~\includegraphics[width=0.4\hsize]{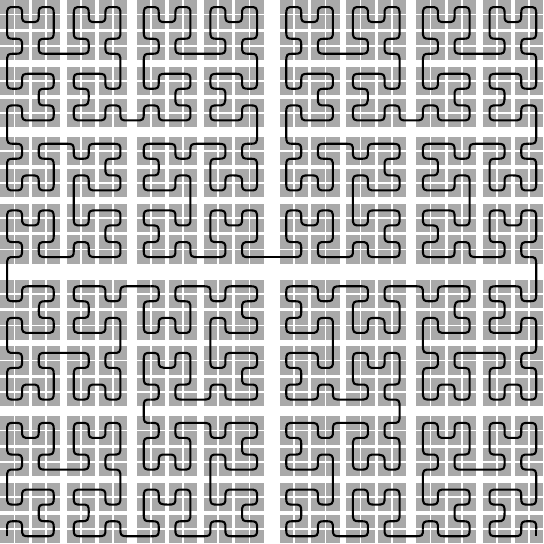}\hfill
}
\caption{(a) A sketch of Peano's space-filling curve.\quad (b) A sketch of Hilbert's space-filling curve.}
\label{fig:peano2d}\label{fig:hilbert2d}
\end{figure}

In response to Peano's publication, Hilbert~\cite{Hilbert} described a two-dimensional space-filling curve based on subdividing a square into four squares (Figure~\ref{fig:hilbert2d}(b)). However, he did not describe how to do something similar in higher dimensions. It is now well-established that it can be done; in fact there are many ways to define a three-dimensional curve based on subdividing a cube into eight octants. Which of these three-dimensional curves would be the three-dimensional ``Hilbert'' curve of choice depends on what properties of a Hilbert curve are deemed essential and what quality measures of the space-filling curve one would like to optimize.

In this paper, we consider a number of properties and quality measures. The quality measures include the $L_\infty$-, $L_2$ and $L_1$-locality measures studied by, among others, Gotsman and Lindenbaum~\cite{Gotsman}, Chochia et al.~\cite{Chochia}, and Niedermeier et al.~\cite{Niedermeier,Niedermeier-Manhattan}; the surface-to-volume measure studied by, for example, Hungersh\"ofer and Wierum~\cite{Hungershoefer}; and the bounding-box quality measures studied by Haverkort and Van Walderveen~\cite{Haverkort}. The one property which we consider to be essential is that the curve is \emph{vertex-continuous}---in effect this means that any pair of consecutive intervals of $[0,1]$ is mapped to a pair of adjacent regions in the unit cube. (Without this property, a curve would be infinitely bad according to most of the locality measures considered.)

The findings presented in this paper include the following:\begin{itemize}
\item Suppose we define a \emph{true} Hilbert space-filling curve to be a space-filling curve that satisfies a certain minimal set of natural requirements such that the only true two-dimensional Hilbert curve is indeed Hilbert's curve. Under this definition there are 10\,694\,807 different three-dimensional true Hilbert curves (modulo rotation, reflection, and reversal). These include a curve that does not start and end in corners of the unit cube and has very good locality properties, and a curve that traverses the points of many facets of the unit cube in the same order as a two-dimensional Hilbert curve would do.
\item If one also allows curves that are described by a system of multiple recursive rules, then the number of different curves is infinite.
    The best previously known upper bounds on the best achievable $L_\infty$-, $L_2$-, and $L_1$-locality measures of three-dimensional generalized Hilbert curves were 29.2, 33.2, and 98.4, respectively. New curves and proofs represented in this paper improve these bounds to 9.45, 18.3, and 75.6, respectively. We prove that no three-dimensional generalized Hilbert curve can have better $L_\infty$-locality.
\end{itemize}

Below, we will first see how exactly we can define Hilbert-like space-filling curves, and then we describe in detail what properties and quality measures of Hilbert-like space-filling curves we consider in this paper. After that we explain how, for some subsets of these properties, we can enumerate all space-filling curves that have these properties, and then we present the most interesting three-dimensional space-filling curves found.

\section{Defining and using space-filling curves}

In general, we can define an order (\emph{scanning order}) of points in $d$-dimensional space as follows. We give a set of rules, each of which specifies (i) how to subdivide a region in $d$-dimensional space into subregions; (ii) what is the order of these subregions; and (iii) for each subregion, which rule is to be applied to establish the order within that subregion. We also specify a unit region, and we indicate what rule is used to subdivide and order it.

\begin{figure}
\centering
\hbox to \hsize{\hfill
(a)\includegraphics[scale=0.7]{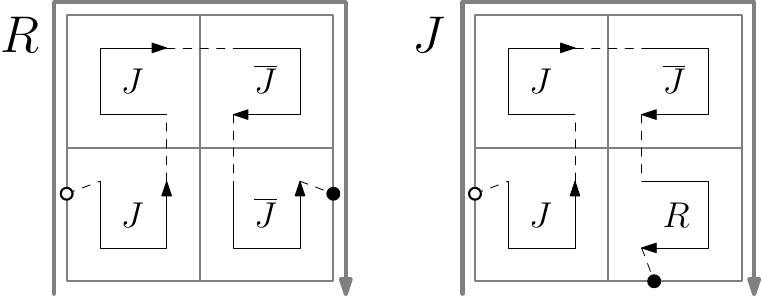}\hfill
(b)\includegraphics[scale=0.7]{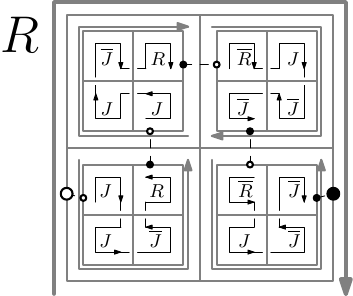}\hfill
}

\vskip\baselineskip
\hbox to \hsize{\hfill
(c)\quad\includegraphics[height=0.35\hsize]{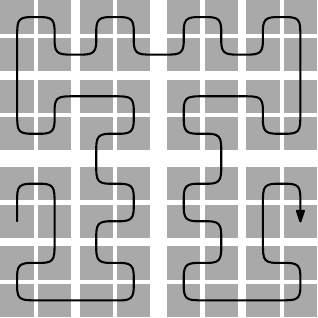}\hfill
(d)\quad\includegraphics[height=0.35\hsize]{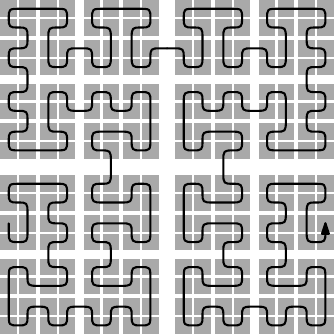}\hfill
}
\caption{%
(a) Definition of a scanning order corresponding to a variant of the $\beta\Omega$-curve.\quad
(b) Applying the rules recursively within \textsf{R}.\quad
(c) The resulting order of the $8 \times 8$-grid.\quad
(d) Order of the $16 \times 16$-grid. On each level, the last cell of each quadrant shares a vertex (and even an edge) with the first cell of the next quadrant.}
\label{fig:betaomega}
\end{figure}

Figure~\ref{fig:betaomega}(a) shows an example (the scanning order depicted corresponds to a section of the $\beta\Omega$-curve by Wierum~\cite{Wierum}). Each rule is identified by a letter, and pictured by showing a unit square, its subdivision into quadrants, the scanning order of the quadrants (by a directed curve along the outer vertices of the quadrants), and the rules applied to the quadrants (by letters). Within each quadrant, there is a scaled copy of the \emph{base pattern}: the curve along the outer vertices of the quadrants. Variations of the rules that consist of simply rotating or mirroring the order within quadrants are indicated by rotating or mirroring the pattern. Variations of rules that consist of reversing the order of cells within a quadrant are indicated by reversing the direction of the pattern and by an overscore above the letter identifying the rule\footnote{It is necessary to do both: reversing the direction of the pattern \emph{and} including the overscore. Figure~\ref{fig:betaomega}(a) illustrates this. Rule $J$ is asymmetric because different rules are applied to the first and the last quadrant. It is therefore important to distinguish between reflected, unreversed applications of $J$, as in the lower left quadrant of rule $R$, and unreflected, reversed applications of $J$, as in the lower right quadrant. Since the base pattern of $J$ is symmetric, it is only the overscore that indicates which of these alternatives is meant.}. Figure~\ref{fig:betaomega}(b) shows how applying these rules recursively, starting with rule \textsf{R}, results in an order of the subquadrants within each quadrant. Combined with the order of the quadrants, this gives an order of all squares in a $8 \times 8$ grid, which is sketched by the curve shown in Figure~\ref{fig:betaomega}(c). By expanding the recursion further, one may order the cells of an arbitrarily fine grid. The polygonal curve that connects the centres of these cells in order thus forms an arbitrarily fine approximation of a space-filling curve; thus the definition of the scanning order is also a definition of a space-filling curve.

When the scanning order is refined to an infinitely fine grid, the first and the last cell visited shrink to points on the boundary of the unit square. We will call these points the \emph{entrance gate} and the \emph{exit gate} of the curve; in Figure~\ref{fig:betaomega} their location is indicated by the white dots and the black dots, respectively. Note that these dots are only shown for clarity, they are a consequence of the definition of the scanning order, rather than a part of the definition.
The letters indicating the rules may be omitted if there are no reversals and no two rules have the same base pattern modulo rotation and reflection.

Figures \ref{fig:decomposable-curves} to~\ref{fig:jupiter} show examples of definitions of three-dimensional space-filling curves.

Note that for many practical applications of space-filling curves it is not necessary to actually draw a curve. It is often enough to be able to decide for any two given points $p$ and $q$, which of these appears first on the curve---or more precisely, which point comes first in the scanning order. This can be decided by expanding the recursion only to the smallest depth at which $p$ and $q$ lie in different quadrants or octants.\footnote{A technical problem with this is that, down from some depth of recursion, a given point $p$ may always lie on the boundary of two or more quadrants. This may create ambiguity about which of two points $p$ and $q$ comes first. This ambiguity can easily be resolved by using a consistent tie-breaking rule, for example, always assign $p$ to the quadrant to its upper left, or to the quadrant that comes first in the order.}

\section{Properties of generalized Hilbert curves}
\label{sec:properties}

The space-filling curve curves studied in this paper all have the following properties:\begin{itemize}
\item They are defined by a rule system as described above, in which the unit region is always the $d$-dimensional unit hypercube, and all rules subdivide hypercubes into $2^d$ smaller hypercubes of equal size. When $d=2$, the hypercubes are squares and the subcubes are called \emph{quadrants}; when $d=3$, the hypercubes are cubes and the subcubes are called \emph{octants}.
\item The scanning order is \emph{vertex-continuous}: this means that if the scanning order visits a set of hypercubes $C_1,...,C_k$, in that order, and we refine the scanning order in each of these hypercubes recursively to depth $m$ so that we obtain an order on $k \cdot 2^{md}$ cells, for any $m \geq 1$, then the last cell within $C_{i-1}$ shares at least one vertex with the first cell within $C_i$, for all $1 < i \leq k$ (see, for example,  Figures \ref{fig:peano2d} and~\ref{fig:betaomega}(d)).
\end{itemize}

The reason to require the second property will be explained at the end of the next section. In this paper, a curve that has the above properties will be called a \emph{generalized Hilbert curve}. More specifically, such a curve will be called a \emph{mono-Hilbert curve} (or \emph{mH-curve}) if the defining rule system contains only one rule, and a \emph{poly-Hilbert curve} (or \emph{pH-curve}) if the defining rule system consists of more than one rule. With these definitions, in one dimension, there is no other generalized Hilbert curve than the trivial curve that traverses the unit line segment from one end to the other. In two dimensions, there are no mono-Hilbert curves other than Hilbert's original curve.

Besides the properties described above, there are a number of other properties we may be interested in, or which may be convenient for case distinctions when exploring the realm of mono-Hilbert curves and poly-Hilbert curves:\begin{itemize}
\item Mono-Hilbert curves may be \emph{order-preserving}: this means that the curve can be described by a defining rule that is not reversed in any octant.\footnote{For poly-Hilbert curves we do not consider this property, since reversals can always be eliminated by using additional rules that describe the reversed versions of the original rules.}
\item Curves may be \emph{symmetric}: this means that they are the same as their own reversed version that is reflected in a line (in 2D) or plane (in 3D) or rotated 180 degrees around a point (in 2D) or line (in 3D). Note that we do not require this line or plane to be parallel to any edges or faces of the unit square/cube: it could also be diagonal. Symmetric mono-Hilbert curves are always order-preserving.
\item $d$-Dimensional mono-Hilbert curves may be \emph{binary-decomposable}: this means that the scanning order can be defined by $d$ rules $\mathsf{R}_0, ..., \mathsf{R}_{d-1}$, where each rule $R_i$ divides a region by an axis-parallel cutting plane into \emph{two} regions of equal size, to each of which rule $R_{(i+1)\bmod d}$ is applied. (We will not require the rules $\mathsf{R}_0, ..., \mathsf{R}_{d-1}$ to be order-preserving, even if the curve itself is order-preserving.) Figure~\ref{fig:decomposable} shows a binary decomposition of the two-dimensional Hilbert curve.
\begin{figure}
\centering
\hbox to \hsize{\hfill
(a)\quad\includegraphics[scale=0.6]{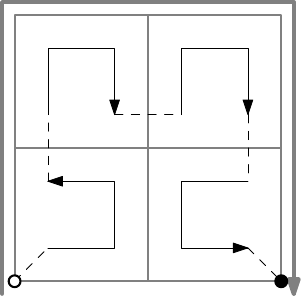}\hfill
(b)\includegraphics[scale=0.6]{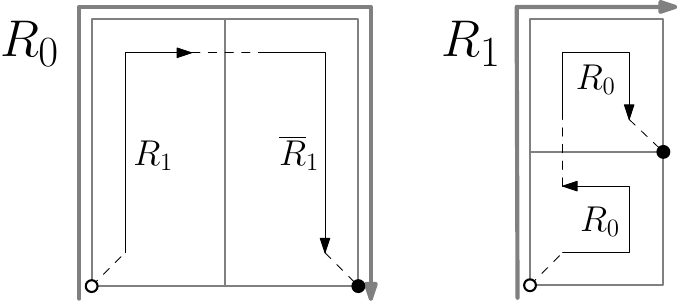}\hfill
}
\caption{(a) Definition of the two-dimensional Hilbert curve. Because there is only one rule and the definition is order-preserving, the letters identifying the rule are omitted. (b) Binary decomposition of the two-dimensional Hilbert curve.}
\label{fig:decomposable}\label{fig:vertexgated}
\end{figure}
\item $d$-Dimensional generalized Hilbert curves may be \emph{downward-compatible} (or \emph{partially down-ward-compatible}): this means that for each (or some) of the $d$ coordinate axes the $d$-dimensional unit cube has a $(d-1)$-dimensional face $F$ orthogonal to that axis, so that the $d$-dimensional curve visits the points of $F$ in the same order as the order in which they would be visited by a (possibly rotated, reflected or reversed) $(d-1)$-dimensional downward-compatible Hilbert curve filling~$F$; we define the trivial one-dimensional generalized Hilbert curve (the traversal of a line segment from one end to the other) to be downward-compatible by definition.
    Note that downward-compatibility cannot be verified by only looking at the order in which the cells of a $4^d$ grid are traversed and this may sometimes be misleading: one needs to make sure that the $(d-1)$-dimensional Hilbert order on the face(s) is maintained also when the grid is refined recursively.
\begin{figure}
\centering
\hbox to \hsize{\hfill
\includegraphics[scale=0.8]{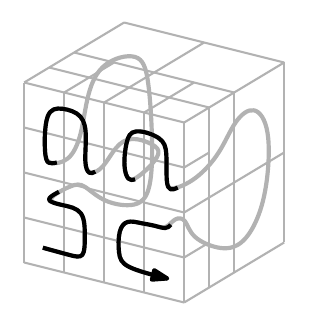}\hfill
}
\caption{A sketch of a partially downward-compatible three-dimensional generalized Hilbert curve.}
\label{fig:compatible}
\end{figure}
\item $d$-Dimensional curves may be \emph{$k$-continuous}, for $0 \leq k < d$: this means that if the scanning order visits a set of cells $C_0,...,C_k$, in that order, and we refine the scanning order in each cell recursively so that we obtain an order on $k \cdot 2^{md}$ subcells, for any $m \geq 1$, then the last subcell within $C_{i-1}$ shares at least one $k$-dimensional face with the first subcell within $C_i$, for all $1 \leq i \leq k$. We may also describe curves that are $(d-1)$-continuous as \emph{face-continuous}, and we may describe curves that are 1-continuous as \emph{edge-continuous}.
\item Curves may be \emph{vertex-gated}: this means that in each rule, the gates are vertices of the unit cube. Similarly, curves may be \emph{edge-gated} (the gates lie in the interior of edges of the unit cube), \emph{face-gated} (the gates lie in the interior of faces of the unit cube), or even, for example, \emph{vertex-edge-gated} (the gate of entry lies at a vertex and the gate of exit lies in the interior of an edge of the unit cube, or the other way around). For example, the two-dimensional Hilbert curve is vertex-gated (Figure~\ref{fig:vertexgated}(a)), whereas the $\beta\Omega$-curve is edge-gated (Figure \ref{fig:betaomega}).
\item Mono-Hilbert curves may be \emph{non-reflecting}: this means that the curve can be described by a defining rule that is possibly rotated and/or reversed, but never reflected in any octant.
\end{itemize}

Hilbert's two-dimensional curve (Figure~\ref{fig:decomposable}(a)) is order-preserving, symmetric, binary-decom-posable, downward-compatible, face-continuous, and vertex-gated\footnote{Note that within the first (bottom left) and last (bottom right) quadrant we must make a left turn, while the base pattern makes a right turn. A left turn can be obtained either by reflecting the base pattern, or by reversing the base pattern. Thus we can choose to say that Hilbert's two-dimensional curve is order-preserving, or that Hilbert's two-dimensional curve is non-reflecting---but it cannot be both.}. We may now ask what properties a three-dimensional space-filling curve should have to be called a \emph{true} three-dimen-sional Hilbert curve. Ultimately, the answer is a matter of taste, and I can only give some considerations that one might want to take into account when answering this question. Requiring a three-dimensional Hilbert curve to have generalizations of \emph{all} properties of the two-dimensional Hilbert curve seems too restrictive: one can always find a property that holds for the two-dimensional curve but cannot be generalized. In fact, in three dimensions, no downward-compatible, binary-decomposable mono-Hilbert curve exists. Thus, a more reasonable choice of requirements for a \emph{true} Hilbert curve may be a minimal set of requirements such that the two-dimensional Hilbert curve exists and is unique (modulo rotation, reflection and reversal). The requirements that define a mono-Hilbert curve (one rule, subdividing a square into four squares, with vertex-continuity) are such a choice: if we drop any of these requirements, the two-dimensional Hilbert curve would not be unique anymore. Regardless of what constitutes, according to our taste, a true three-dimensional Hilbert curve, we may, of course, still prefer to work with other three-dimensional curves---for example, poly-Hilbert curves---in particular applications.

\section{Quality measures of space-filling curves}
\label{sec:measures}

In addition to the properties mentioned above, we consider a number of curve quality measures. To define these quality measures we need the following definitions: given two points $p$ and $q$ in the unit cube, let $\delta_i(p,q)$ be the $L_i$-distance between $p$ and $q$, let $C(p,q)$ be the set of points that appear on the curve between $p$ and $q$, and let $\mathrm{vol}(S)$, $\mathrm{diam}_i(S)$, $\bbox(S)$, $\mathrm{bball}_i(S)$, and $\mathrm{surface}(S)$ be the volume, $L_i$-diameter, the minimum axis-parallel bounding box, the minimum bounding $L_i$-ball, and the measure of the boundary of the set $S$, respectively. We can now define the following quality measures of a $d$-dimensional space-filling curve:\begin{itemize}
\item \emph{$L_i$-locality} or $\mathrm{WL}_i$ (for $i \in \{1,2,\infty\}$): the maximum, over all pairs of points $p,q$ in the unit cube, of $\delta_i(p,q)^d$ $/$ $\mathrm{vol}(C(p,q))$;
\item \emph{$L_i$-diameter ratio} or $\mathrm{WD}_i$ (for $i \in \{1,2,\infty\}$): the maximum, over all pairs of points $p,q$ in the unit cube, of $\mathrm{diam}_i(C(p,q))^d$ $/$ $\mathrm{vol}(C(p,q))$;
\item \emph{$L_i$-bounding ball ratio} or $\mathrm{WBB}_i$ (for $i \in \{1,2,\infty\}$): the maximum, over all pairs of points $p,q$ in the unit cube, of $\mathrm{vol}(\mathrm{bball}_i(C(p,q)))$ $/$ $\mathrm{vol}(C(p,q))$;
\item \emph{surface ratio} or \WS: the maximum of $(\mathrm{surface}(C(p,q))/2d)^{d/(d-1)}$ $/$ $\mathrm{vol}(C(p,q))$;
\item \emph{bounding-box volume ratio} or \WBV: the maximum of $\mathrm{vol}(\mathrm{bbox}(C(p,q)))$ $/$ $\mathrm{vol}(C(p,q))$;
\item \emph{bounding-box surface ratio} or \WBS: the maximum of $(\mathrm{surface}(\mathrm{bbox}(C(p,q)))/2d)^{d/(d-1)}$ $/$ $\mathrm{vol}(C(p,q))$.
\end{itemize}
In fact, the $L_i$-locality and the $L_i$-diameter ratio of a space-filling curve are always equal\footnote
{Consider two points $p,q$, and let $r,s$ be two points of $C(p,q)$ that determine the diameter of $C(p,q)$ and are as close to each other as possible along the curve. Since $C(r,s) \subseteq C(p,q)$ and $\delta_i(r,s) \geq \delta_i(p,q)$, reducing the curve section under consideration from $C(p,q)$ to $C(r,s)$ can only increase its locality ratio $\delta_i(p,q)^d / \mathrm{vol}(C(p,q))$ and its diameter ratio $\mathrm{diam}_i(C(p,q))^d / \mathrm{vol}(C(p,q))$; when $C(p,q)$ shrinks to $C(r,s)$ both will rise to the same value $\delta_i(r,s)^d / \mathrm{vol}(C(r,s))$. Therefore, $\mathrm{WL}_i$ and $\mathrm{WD}_i$ are determined by the same point pairs and have the same value.}.
For $i = \infty$ and $i = 1$, the $L_i$-diameter ratio and the $L_i$-bounding ball ratio are always equal as well, modulo a fixed constant factor\footnote
{This is shown by similar arguments as for the equality of $L_i$-locality and $L_i$-diameter ratio, using the observation that for $i = \infty$ and $i = 1$, the diameter of the minimum bounding $L_i$-ball of any set $S$ is equal to the $L_i$-diameter of $S$.}.
I conjecture that the same holds for $i = 2$: we can prove this for two-dimensional space-filling curves\footnote
{Consider two points $p$ and $q$, and let $B$ be a minimal set of points of $C(p,q)$ such that $\mathrm{bball}_2(B) = \mathrm{bball}_2(C(p,q))$. Suppose $B$ consists of three points $r$, $s$, and $t$, in the order in which they occur along the space-filling curve between $p$ and $q$. Let $\rho$, $\sigma$ and $\tau$ be the angles of the triangle $\triangle rst$ at $r$, $s$, and $t$, respectively. Note that $\rho, \sigma, \tau < \pi/2$, and the diameter of $\mathrm{bball}_2(C(p,q))$ is $\delta_2(r,s)/\sin\tau = \delta_2(s,t)/\sin\rho$.

Now, suppose for the sake of contradiction, that neither $C(r,s)$ nor $C(s,t)$ has a $L_2$-bounding ball ratio that is worse than that of $C(p,q)$. Thus
$\mathrm{vol}(\mathrm{bball}_2(C(r,s)))/\mathrm{vol}(C(r,s)) \leq \mathrm{vol}(\mathrm{bball}_2(C(p,q)))/\mathrm{vol}(C(p,q))$, which implies
$\delta_2(r,s)^2/\mathrm{vol}(C(r,s)) \leq  (\delta_2(r,s)^2/\sin^2\tau)/\mathrm{vol}(C(p,q))$, and thus
$\sin^2\tau \leq \mathrm{vol}(C(r,s))/\mathrm{vol}(C(p,q))$; similarly:
$\sin^2\rho \leq \mathrm{vol}(C(s,t))/\mathrm{vol}(C(p,q))$; adding these up yields:
$\sin^2\tau + \sin^2\rho \leq 1$. However, since all angles of $\triangle rst$ are acute angles, we have $\rho = \pi - \sigma - \tau > \pi/2 - \tau$ and $\sin^2\tau + \sin^2\rho > \sin^2\tau + \sin^2 (\pi/2 - \tau) = 1$. Therefore $p$ and $q$ cannot determine the $L_2$-bounding ball ratio.

Furthermore, suppose that neither $C(r,s)$ nor $C(s,t)$ has a $L_2$-diameter ratio that is worse than that of $C(p,q)$. Thus
$\mathrm{diam}_2(C(r,s))^2/\mathrm{vol}(C(r,s)) \leq \mathrm{diam}_2(C(p,q))^2/\mathrm{vol}(C(p,q))$, which again implies $\delta_2(r,s)^2/\mathrm{vol}(C(r,s)) \leq  (\delta_2(r,s)^2/\sin^2\tau)/\mathrm{vol}(C(p,q))$; proceeding in the same way as for the $L_2$-bounding ball ratio, we find that $p$ and $q$ cannot determine the $L_2$-diameter ratio either.

Hence both the $L_2$-bounding ball ratio and the $L_2$-diameter ratio must be determined by a pair of points $p,q$ for which $B$ consists of only two points $r$ and $s$, in which case the $L_2$-diameter of $C(p,q)$ is also the diameter of $\mathrm{bball}_2(C(p,q))$.},
but I have not found a proof for three-dimensional space-filling curves yet.

In a previous publication on two-dimensional space-filling curves we described algorithms to compute bounds on $\mathrm{WL}_i$, \WBV, and \WBS for any given curve~\cite{Haverkort}. We have now implemented higher-dimensional versions of these algorithms, including an algorithm to compute \WS, and used these algorithms to analyse the curves discussed in the next sections of this paper.

Note that a curve that is not vertex-continuous would, at some level of recursion, visit two cells $C_1$ and $C_2$ that are consecutive in the order but not adjacent in space. One could say the curve \emph{jumps} from $C_1$ to $C_2$. It is generally assumed that jumps cause bad performance in many applications. This also shows in the quality measures described above. To see this, consider refining the recursion, and consider the last subcell $C'_1$ in $C_1$ and the first subcell $C'_2$ in $C_2$. These, too, are consecutive in the order, but not adjacent in space: the distance $\delta_i(p,q)$ between any pair of points $p \in C'_1, q \in C'_2$ is at least the distance between $C_1$ and $C_2$, while the size of the set of points $C(p,q)$ between $p$ and $q$ along the curve is at most $\mathrm{vol}(C'_1) + \mathrm{vol}(C'_2)$. By refining the recursion, getting smaller cells $C'_1$ and $C'_2$, we can make the ratio $\delta_i(p,q)^d / \mathrm{vol}(C(p,q))$ arbitrarily big. Hence, a curve that is not vertex-continuous would have an infinitely bad score on the $L_i$-locality measures mentioned above. Similarly, it would have an infinitely bad score on the bounding box quality measures. This is why we restrict our attention to vertex-continuous curves.

\section{Previous results on the quality of three-dimensional generalized Hilbert curves}

Many authors have studied locality measures for two-dimensional space-filling curves (for an overview, see Haverkort and Van Walderveen~\cite{Haverkort}), but for three-dimensional curves results are few. Gotsman and Lindenbaum~\cite{Gotsman} proved that a three-dimensional Hilbert curve has \WLEuc at most $48\sqrt 6 \approx 117.6$, and reported to have found that \WLEuc is at most 23 by computer simulation. From their publication it is not clear which curve or which curves exactly they evaluated, but their results are confirmed by our computations: the curve which we will identify as A26.0010\,1011.1011\,0011 (see Figure~\ref{fig:A26.2b.b3}) fits the scope of their publication and we found that its \WLEuc-value is 22.9.

Niedermeier et al.~\cite{Niedermeier} proved lower bounds of $\WLMax \geq 8.25$, $\WLEuc \geq 11.1$ and $\WLMan \geq 42.625$ for space-filling curves that fill a cube, and conjectured that stronger lower bounds of $\WLMax \geq 9$, $\WLEuc \geq 12.39$ and $\WLMan \geq 54$ would hold---they could prove the stronger lower bounds only for curves that fill a cube such that the exit gate coincides with the entrance gate. They also analysed the three symmetric, non-reflecting, face-continuous three-dimensional order-preserving mono-Hilbert curves; in the numbering scheme described in Section~\ref{sec:OmHnumbering} below, these curves will be identified as A16.00.00, A18.00.00, and A26.00.00. In particular, for A26.00.00 (Figure~\ref{fig:A26-curves}(b)) they found $\WLMax \leq 29.2$, $\WLEuc \leq 33.2$, and $\WLMan \leq 98.4$; we now obtain the following more precise bounds for this curve: $\WLMax = 24.22$; $\WLEuc = 26.23$; and $\WLMan = 98.34$. For A16.00.00 and A18.00.00 the numbers are higher.

Chochia and Cole~\cite{Chochia} prove that all face-continuous three-dimensional order-preserving mono-Hilbert curves have $\WLMan$ at least $85\frac13$ and at most $128$. They present a new face-continuous three-dimensional poly-Hilbert curve, the $H^*$-curve, which they compared to the non-reflecting, face-continuous three-dimensional order-preserving mono-Hilbert curves by estimating  their \WLMan-values for pairs of points in a $32\times 32\times 32$ grid. Chochia and Cole describe their curve by giving four rules in which the first and the last octant are left open, together with a table with 24 entries that specifies how to choose one out of three possibilities for each end, depending on the context. In effect they thus describe $4 \cdot 3 \cdot 3 = 36$ combinations; however, expanding the recursion one would find that only ten of these are actually used; thus the $H^*$-curve is a poly-Hilbert curve with ten rules. With our algorithm we found that it has $\WLMan = 84.0$, which is indeed better than any mono-Hilbert curve.

\section{Generating three-dimensional mono-Hilbert curves}
\label{sec:generating}

There are only a finite number of different three-dimensional mono-Hilbert curves, so one may try to generate all of them.

The recursive definitions of space-filling curves as shown in Figures \ref{fig:decomposable-curves} to~\ref{fig:jupiter} focus on describing the order in which octants within cubes are visited. The number of different ways to order eight octants is 840, so we could consider generating all of them, and then generating all possibilities to rotate, reflect and/or reverse the order within each octant such that a vertex-continuous curve results.%
\footnote{We can identify the octants by coordinate triples from $\{0,1\}^3$. Modulo reflections and rotations, we may assume that the traversal starts with $(0,0,0)$, that $(0,0,1)$ appears before $(0,1,0)$, and that $(0,1,0)$ appears before $(1,0,0)$. The remaining four octants could be ordered in $4!$ ways, and they could be mixed in with the sequence $(0,0,1)$, $(0,1,0)$, $(1,0,0)$ in $7!/3!4!$ ways. Thus there are $7!/3! = 840$ different orders.}
However, if we would take this approach, it would be quite complicated to figure out if, in recursion, the exit gate of each octant matches the entrance gate of the next octant.

Therefore we take an alternative approach, generalizing the approach taken by Alber and Niedermeier~\cite{Alber}. Instead of first generating only the order in which octants are visited, we first fix the entrance and exit gates of the unit cube, and then generate possible sequences of, alternatingly, octants and gates between them. While doing so, we make sure that every triple of an octant with its entrance gate and its exit gate can be obtained by a transformation of the unit cube with its entrance and exit gates. Thus we find a set of possible \emph{connection schemes}. A connection scheme is not a space-filling curve definition yet: this is because for each octant~$C$, there may be multiple transformations that map the unit cube and its gates to octant~$C$ and its gates; for a full definition of a space-filling curve we have to choose, for each octant, which of these transformation to use. Thus a connection scheme may give rise to a number of different actual space-filling curves. Figure~\ref{fig:scheme-example} shows such a connection scheme, with two examples of space-filling curve definitions that match the scheme.

\begin{figure}
\centering
\hbox to \hsize{\hfill
(a)\includegraphics[scale=0.8]{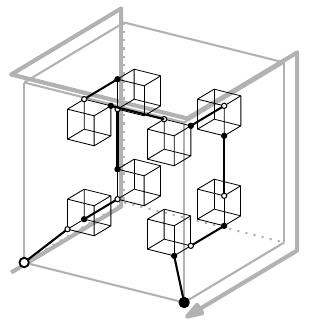}\hfill
(b)\includegraphics[scale=0.8]{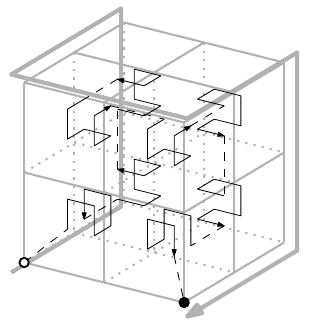}\hfill
(c)\includegraphics[scale=0.8]{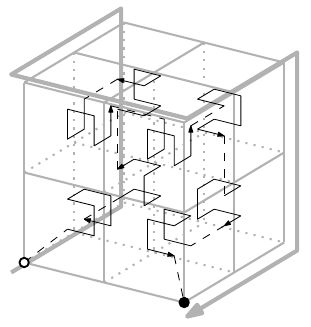}\hfill
}
\caption{(a) An example of a connection scheme (A26). (b)  A curve (A26.00.0000\,0000) that matches the connection scheme: it is a symmetric, non-reflecting curve with optimal \WBV among order-preserving mono-Hilbert curves. (c) Another curve (A26.00.1101\,1011) that matches the connection scheme: it is a downward-compatible mono-Hilbert curve that shows two-dimensional Hilbert orders on all faces except the back face of the cube.}
\label{fig:scheme-example}\label{fig:A26-curves}
\end{figure}

Of course the viability of this approach depends on the ease with which we can generate all realizable combinations of entrance and exit gates for the unit cube. Fortunately the number of possible combinations is fairly limited. We will adopt a coordinate system in which the lower left front vertex of the unit cube has coordinates $(0,0,0)$, the lower right front vertex is at $(1,0,0)$, the lower left back vertex is at $(0,1,0)$, and the upper left front vertex is at $(0,0,1)$. We can prove that for order-preserving mono-Hilbert curves, we do not need to consider any other combinations than the following:\begin{itemize}
\item For gates at vertices, we may assume that the entrance gate is at $(0,0,0)$ and the exit gate is either at $(1,0,0)$ or at $(0,1,1)$ (proof in the appendix, Lemma~\ref{lem:vertex-vertex-gates}).
\item For gates in the interior of edges, we may assume that the entrance gate is at $(\frac13,0,0)$ and the exit gate is at $(1,\frac13,1)$ (proof in the appendix, Lemma~\ref{lem:edge-edge-gates});
\item For gates in the interior of faces, we may assume that the entrance gate is at $(0,\frac13,\frac13)$ and the exit gate is at $(\frac23,\frac13,0)$  (proof in the appendix, Lemma~\ref{lem:face-face-gates});
\end{itemize}
Obviously, in order-preserving mono-Hilbert curves, both gates must be of the same type (at a vertex, in the interior of an edge, or in the interior of a face). For non-order-preserving mono-Hilbert curves this is not necessarily the case and we need to consider the following combinations:\begin{itemize}
\item For one gate at a vertex and one gate in the interior of an edge: entrance at $(0,0,0)$, exit at $(1,\frac12,0)$ (proof in the appendix, Lemma~\ref{lem:vertex-edge-gates});
\item For one gate at a vertex and one gate in the interior of a face: entrance at $(0,0,0)$, exit at $(1,\frac12,\frac12)$ (proof in the appendix, Lemma~\ref{lem:vertex-face-gates});
\item For one gate in the interior of an edge and one gate in the interior of a face: no combination possible (proof in the appendix, Lemma~\ref{lem:edge-face-gates});
\end{itemize}

Below we will discuss the number of curves of each type. In total, we find that there are 10\,694\,807 three-dimensional mono-Hilbert curves.

\paragraph{Type A and B: vertex-gated curves}
There are two types of vertex-gated curves: with the exit gate at $(1,0,0)$ (type A) and with the exit gate at $(0,1,1)$ (type B). Both types have the entrance gate at $(0,0,0)$.

Connection schemes of type A and B can easily be enumerated by exhaustive search: there are 29 connection schemes of type A and 149 connection schemes of type B (Figure~\ref{fig:connection-schemes} shows some of them).

\begin{figure}
\centering
\hbox to\hsize{\includegraphics[width=\hsize]{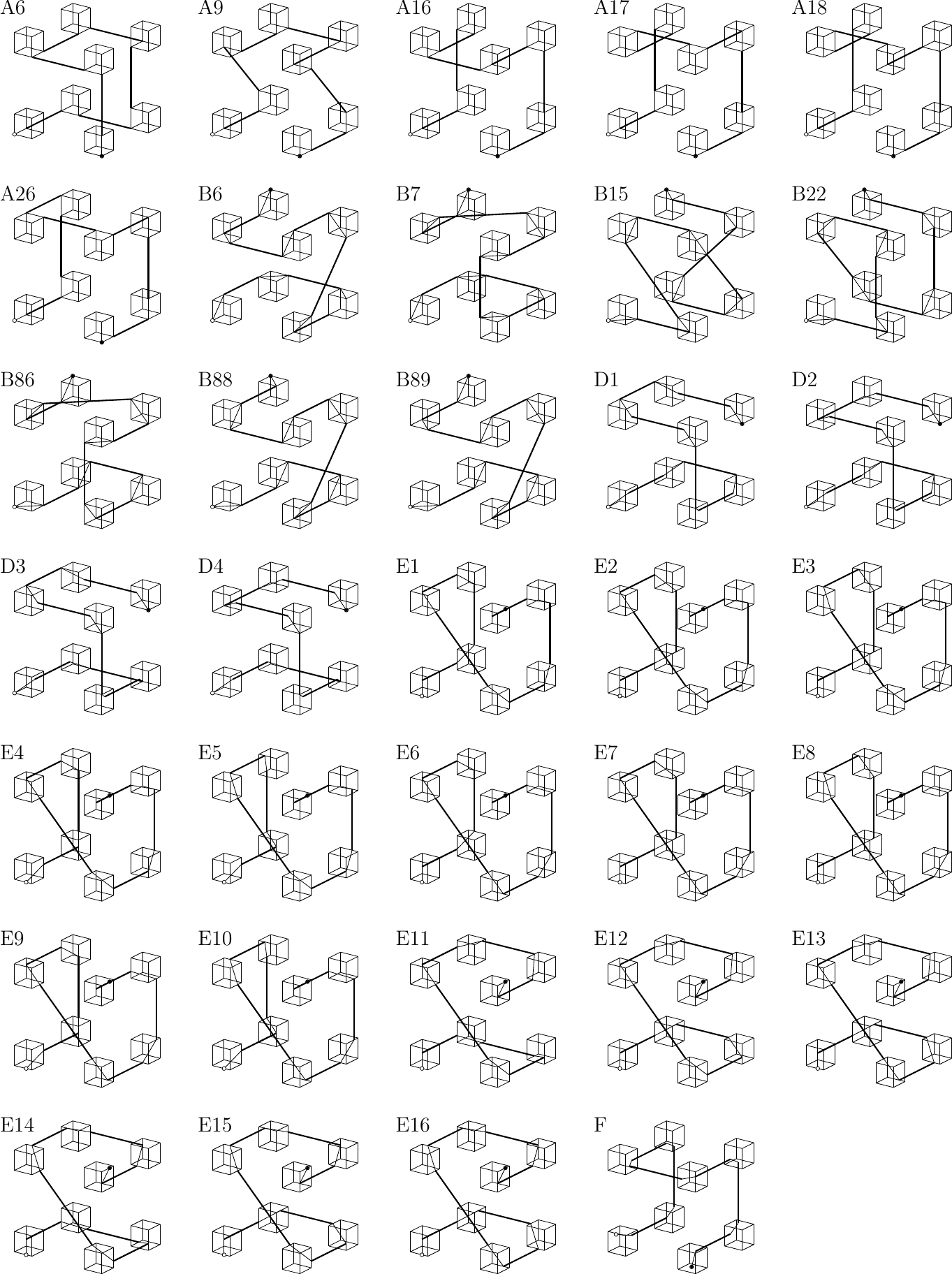}}
\caption{Connection schemes for mono-Hilbert curves: a selection of schemes of type A and B mentioned in this paper, and all schemes of type D, E, and F. Figures of the complete set of schemes of type A and B can be requested from the author.}
\label{fig:connection-schemes}
\end{figure}

With vertex-gated curves, one can tell from the connection scheme alone whether the curve is face-continuous, edge-continuous, or only vertex-continuous. The 29 connection schemes of type A include include 5 schemes for face-continuous curves (3 symmetric and 2 asymmetric)\footnote{The five schemes for face-continuous curves (A6, A16, A17, A18, A26 in Figure~\ref{fig:connection-schemes}) were also described by Alber and Niedermeier. They also described a sixth scheme, version (b) of $\mathrm{Hil}^3_1.\mathrm{B}$ in their paper; however this is the same as version (a) of $\mathrm{Hil}^3_1.\mathrm{B}$ (A6) rotated 180 degrees around a line through the midpoints of the lower front and the upper back edge.}, 21 schemes for edge-continuous curves (7 symmetric and 14 asymmetric), and 3 schemes for vertex-continuous curves (all asymmetric); the 149 connection schemes of type B (18 symmetric and 131 asymmetric) are all only edge-continuous.

We will now analyse the number of different curves (modulo rotation, reflection and reversal) that we can make from these connection schemes. For each octant in any given scheme, we may choose whether to use a forward copy of the curve, or a reversed-and-reflected copy (where the reflection is in the plane containing the gates and the centrepoint of the octant). Given this choice, the location of the gates determines how the curve should be rotated within the octant. Then we still have the choice whether or not to reflect it (that is, whether or not to apply reflection to a forward copy, or to undo the reflection of a reversed copy). In total, there are two choices to make for each octant, which makes 16 choices for the full curve. Thus we get:\begin{itemize}
\item asymmetric curves from asymmetric schemes: $2^{16} = 65\,536$ curves from each scheme;
\item symmetric curves from symmetric schemes: symmetric curves are invariant under reversal-and-reflection, which renders eight choices meaningless; moreover, the reflections of the first four octants should match those of the last four octants, so that there are only four choices left to make; thus we get $2^4 = 16$ symmetric curves from each scheme;
\item asymmetric curves from symmetric schemes: in this case, reversal-and-reflection does have an effect: although it does not change the order in which the suboctants of any octant are traversed (because the connection scheme is symmetric), it reverses the sequence of decisions whether or not to apply reflection when we traverse the octants in order. This sequence should be asymmetrical, otherwise we end up with a symmetric curve after all. Thus we get $2^8 = 256$ combinations of choices which octants to reverse-and-reflect, times $256-16 = 240$ asymmetric combinations of choices which octants to reflect. However, since the connection scheme is symmetric, every curve is generated twice in this way: for every curve we also generate a reversed-and-reflected copy. Hence, in total, we get $256 \cdot 240 / 2 = 30\,720$ asymmetric curves from each asymmetric connection scheme.
\end{itemize}
\label{sec:OmHnumbering}
In total, the number of curves of type A is $(3 + 7) (16 + 30\,720) + (2 + 14 + 3)(65\,536) = 1\,552\,544$ and the total number of curves of type B is $18 \cdot (16 + 30\,720) + 131 \cdot 65\,536 = 9\,138\,464$.
We will identify these curves by names that consist of:\begin{itemize}
\item their type (A or B);
\item their base-10 connection scheme number (see Figure~\ref{fig:connection-schemes});
\item a dash;
\item an eight-digit binary number indicating which octants are reversed (this number has a one for the i-th digit from the right if and only if the $i$-th octant is reversed);
\item an eight-digit binary number indicating which octants are reflected (this number has a one for the i-th digit from the right if and only if the $i$-th octant is reflected).
\end{itemize}
Alternatively, the binary numbers may be given in hexadecimal notation. Note that a reverse-and-reflect operation as described above affects \emph{two} digits in the curve identifier.

From the type-A curves, 223\,280 are face-continuous; 920 are face-continuous and order-preserving\footnote{These are the 1\,536 curves from Alber and Niedermeier~\cite{Alber} with symmetric copies removed.}.

\paragraph{Type C: vertex-edge-gated curves}

An exhaustive search brought up 2\,758 connection schemes with the entrance gate at vertex $(0,0,0)$ and the exit gate half-way an edge at $(1,\frac12,0)$. The location of the gates leaves no freedom with respect to reversals and reflections, so each scheme generates exactly one curve. None of these curves is face-continuous and none of these curves is order-preserving.

\paragraph{Type D: vertex-face-gated curves}

An exhaustive search brought up 4 connection schemes with the entrance gate at vertex $(0,0,0)$ and the exit gate in the middle of a face at $(1,\frac12,\frac12)$, see Figure~\ref{fig:connection-schemes}.
The location of the gates leaves freedom with respect to reflections, but not with respect to reversals, so each scheme generates exactly $2^8 = 256$ curves. Thus there are 1024 curves of type D in total. None of them is face-, or even edge-continuous, and none of them is order-preserving.

\paragraph{Type E: edge-gated curves}

An exhaustive search brought up 16 connection schemes with the entrance gate at $(\frac13,0,0)$ and the exit gate at $(1,\frac13,1)$, see Figure~\ref{fig:connection-schemes}. The location of the gates leaves no freedom with respect to reversals and reflections, so each scheme generates exactly one curve. None of these curves is face-continuous and none of them is order-preserving.

\paragraph{Type F: face-gated curves}

An exhaustive search brought up exactly one connection scheme with the entrance gate at $(0,\frac13,\frac13)$ and the exit gate at $(\frac23,\frac13,0)$, see Figure~\ref{fig:connection-schemes}. The location of the gates leaves no freedom with respect to reversals and reflections, so there is, in fact, only one curve of type F. Since it is face-gated, it is also face-continuous. It is not order-preserving.

\section{Noteworthy three-dimensional mono-Hilbert curves}

\begin{figure}
\centering
\hbox to \hsize{\hfill
\includegraphics[scale=0.8]{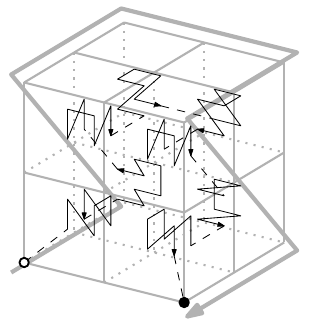}\hfill
\includegraphics[scale=0.8]{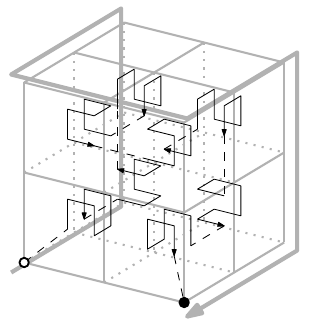}\hfill
}

\addvspace\baselineskip
\hbox to \hsize{\hfill
\includegraphics[scale=0.8]{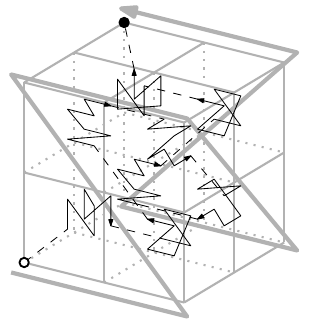}\hfill
\includegraphics[scale=0.8]{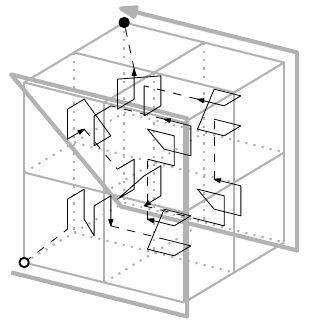}\hfill
}
\caption{The three-dimensional binary-decomposable generalized Hilbert curves. From each of these curves variations can be made by reflecting all of the 1st, 4th, 5th and the 8th octants, and/or all of the 2nd, 3rd, 6th and 7th octants.}
\label{fig:decomposable-curves}
\end{figure}

\paragraph{Binary-decomposable curves}\label{sec:binary-decomposable}
There are 16 binary-decomposable mono-Hilbert curves. Four of these curves are A9.00.0000\,0000, A16.00.0011\,1100, B15.00.1100\,0011, and B22.00.0000\,0000, shown in Figure~\ref{fig:decomposable-curves}. From each of them three variations can be made by reflecting all of the 1st, 4th, 5th and the 8th octants, and/or all of the 2nd, 3rd, 6th and 7th octants. Thus there are 16 binary-decomposable order-preserving mono-Hilbert curves in total. 
All binary-decomposable curves are symmetric, and therefore, order-preserving.

\paragraph{Downward-compatible curves}
There are ten downward-compatible mono-Hilbert curves; all are based on connection scheme A26.
Curve A26.00.1101\,1011 (Figure~\ref{fig:A26-curves}(c)) is an order-preserving mono-Hilbert curve with two-dimensional Hilbert orders on the left, right, top, bottom and front face of the cube.
A26.00.1001\,1001 is similar but sacrifices the bottom face for better locality measures on most aspects.
The other eight downward-compatible mono-Hilbert curves are not order-preserving, and show two-dimensional Hilbert orders on only three faces (right, front, and top).



\bigskip
Hilbert curves are often associated with face-continuous, vertex-gated, and maybe even order-preserving mono-Hilbert curves: above we saw that there are 920 such curves in three dimensions. However, we argued that there are more than 10\,000 as many other mono-Hilbert curves that could also be considered to be three-dimensional generalizations of Hilbert's curve, see Table~\ref{tab:extracurves}.
What do these extra curves bring us? In the following paragraphs, we will refer to tables with locality and bounding-box quality measures. The bounds are rounded numbers; the last digits may be off by one.

\begin{table}
\centering
\begin{tabular}{r|ccc}
number of curves & face-continuous & vertex-gated & order-preserving \\
\hline
             920 &   $\bullet$     &   $\bullet$  &    $\bullet$     \\
               1 &   $\bullet$     &              &                  \\
         41\,288 &                 &   $\bullet$  &    $\bullet$     \\
\hline
        222\,360 &   $\bullet$     &   $\bullet$  &                  \\
    10\,426\,440 &                 &   $\bullet$  &                  \\
          3\,798 &                 &              &                  \\
\hline
               0 &   $\bullet$     &              &    $\bullet$     \\
               0 &                 &              &    $\bullet$     \\
\end{tabular}
\caption{Numbers of mono-Hilbert curves with different combinations of properties.}
\label{tab:extracurves}
\end{table}

\paragraph{Non-face-continuous curves}
From an implementation point of view, order-preserving curves are arguably simpler than non-order-preserving curves. Within the realm of order-preserving curves, there are 920 face-continuous curves and 41\,288 non-face-continuous curves---all order-preserving curves are vertex-gated. Although conventional wisdom holds that face-continuity is good for locality, the flexibility gained by letting go of face-contuinity in fact allows us to achieve slightly better locality bounds on some measures---see Table~\ref{tab:OmH}.

\begin{table}
\centering
\begin{tabular}{|l|r|rl|}
\hline
measure & best face-continuous & \multicolumn{2}{c|}{best vertex-continuous} \\
        & value                & value & some example curves \\
\hline
\WLMax  &                18.7 &       15.6 & B86.00.2c, B86.00.6c \\
\WLEuc  &                24.6 &       23.6 & B6.00.42, B88.00.81, B89.00.41 \\
\WLMan  &                90.5 & see $\leftarrow$ & A26.00.36 \\
\WS     &                1.74 & see $\leftarrow$ & A16.00.3c \\
\WBV    &                3.11 & see $\leftarrow$ & A26.00.00, A26.00.99 \\
\WBS    &                5.17 &       4.82 & B7.00.6e \\
\hline
\end{tabular}
\caption{Locality properties of the best order-preserving mono-Hilbert curves.}
\label{tab:OmH}
\end{table}

However, as soon as one is willing to give up on order-preservation, the face-continuous curves win again (see Table~\ref{tab:mH}), and no advantages of non-face-continuous curves were found.

\begin{table}
\centering
\begin{tabular}{|l|r|r|rl|}
\hline
measure & best order-pres.
        & best vertex-gated
        & \multicolumn{2}{c|}{best of all} \\
        & value
        & value
        & value & some example curves \\
\hline
\WLMax  &             15.6 &             12.4 & see $\leftarrow$ & A26.2b.b3 \\
\WLEuc  &             23.6 &             21.3 &             18.6 & F \\ 
\WLMan  &             90.5 &             89.7 & see $\leftarrow$ & many A26-curves; F has 89.8\\
\WS     &             1.74 & see $\leftarrow$ & see $\leftarrow$ & many A16-, A17-, A18-curves\\
\WBV    &             3.11 & see $\leftarrow$ & see $\leftarrow$ & A26.2b.b3, A26.00.00 \\
\WBS    &             4.82 &             4.40 &             4.05 & F \\
\hline
\end{tabular}
\caption{Locality properties of the best mono-Hilbert curves.}
\label{tab:mH}
\end{table}

\paragraph{Non-vertex-gated curves}
Hilbert curves are often said to start and end in a vertex, because this is what the two-dimensional Hilbert curve does. However, this may merely be a consequence of the fact that in two dimensions, there is only one mono-Hilbert curve and it happens to be vertex-gated. In higher dimensions it is possible to construct mono-Hilbert curves that are not vertex-gated, although in three dimensions, no \emph{order-preserving} non-vertex-gated mono-Hilbert curve exists. Non-order-preserving, non-vertex-gated three-dimensional mono-Hilbert curves do exist; most of these curves have bad locality properties (according to the measures discussed in Section~\ref{sec:measures}), but one of them is special. The face-gated type-F curve has the best \WLEuc-value and the best \WBS-value of all mono-Hilbert curves. It is one of the two most strongly bending mono-Hilbert curves---see below.

\paragraph{Non-order-preserving curves and bend size}
Allowing non-order-preserving curves results in improved measured locality on $\WLMax$, $\WLEuc$, and $\WBS$ (see Table~\ref{tab:mH}). This is due to two special curves in particular.

\begin{figure}
\centering
\hbox to \hsize{\hfill
\includegraphics[scale=0.8]{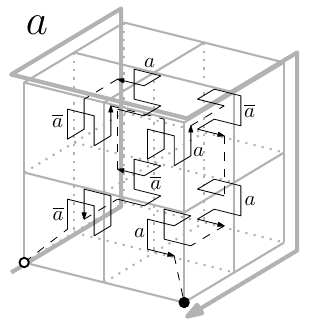}\hfill
\includegraphics[scale=0.8]{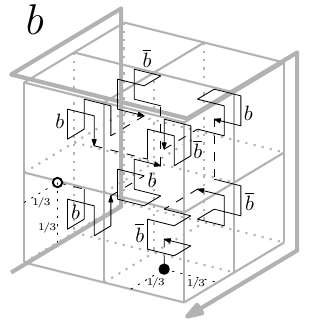}\hfill
}
\caption{%
(a) A26.0010\,1011.1011\,0011, a curve with optimal \WLMax and \WBV among mono-Hilbert curves.
(b) Curve F, the curve with the best-known \WLEuc-value among mono-Hilbert curves.}
\label{fig:reversing-curves}\label{fig:A26.2b.b3}
\end{figure}

Consider, at any level refinement, three cubic cells $a$, $b$ and $c$ of equal size that are consecutive in the scanning order. The size of the bend at $b$ could be expressed as the Euclidean distance between the centres of $a$ to $c$, divided by the width of one of these cells. If the curve is face-continuous, then all bends have size $\sqrt{2}$ or $2$; if the curve is only edge-continuous or vertex-continuous, then it must contain bends larger than $\sqrt{2}$ (consider a set of three consecutive cells that are octants of two larger consecutive cells that do not share a face).


There are only two mono-Hilbert curves that do not contain any bends larger $\sqrt{2}$: the curves A26.2b.b3 and F (Figure~\ref{fig:reversing-curves}). These two curves could thus be considered the most strongly bending mono-Hilbert curves, and it is indeed these two curves that hold the records for best measured locality properties on \WLMax, \WLEuc, \WLMan, \WBV and \WBS among mono-Hilbert curves. Neither of these curves is order-preserving.

\section{Poly-Hilbert curves}

The number of possible poly-Hilbert curves is unbounded, so we cannot enumerate all of them. The curves presented below are the result of a search by hand, and are not necessarily optimal.

\begin{figure}
\centering
(a)\includegraphics[scale=0.8]{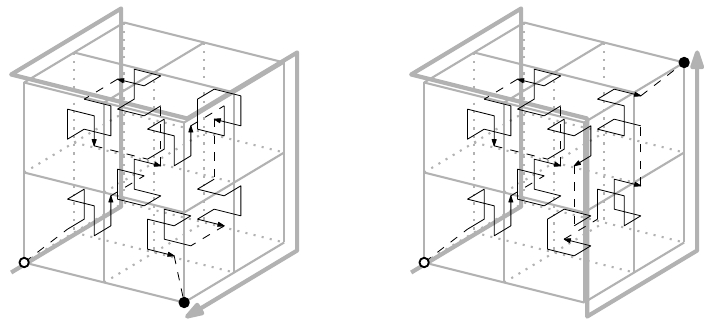}

\addvspace{2\baselineskip}
(b)\includegraphics[scale=0.8]{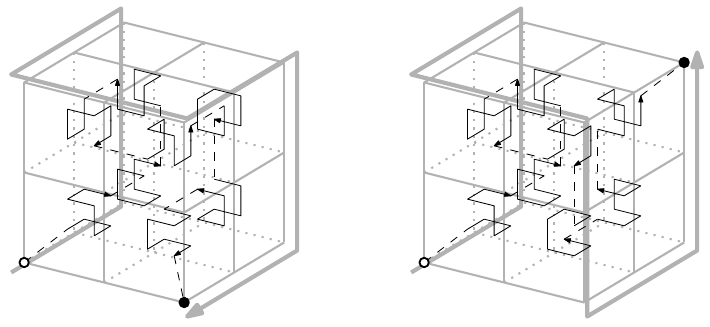}
\caption{%
(a) The Neptunus curve, a curve with optimal \WLMax, the best known \WBS, and within 1\% of the best known \WLEuc among poly-Hilbert curves.
(b) The Luna curve, a partially downward-compatible curve within 1\% of the best known \WLEuc, \WLMan, and \WS among poly-Hilbert curves.}
\label{fig:di-curves}
\end{figure}

\paragraph{$L_\infty$-locality and bounding-box surface ratio}
Figure~\ref{fig:di-curves}(a) shows my Neptunus curve, a poly-Hilbert curve of two rules with $\WLMax = 9.45$, $\WLEuc = 18.33$, and $\WBS = 3.53$. In the appendix we prove that the \WLMax-value is optimal for poly-Hilbert curves (Theorem~\ref{th:WLMaxlbd}). The \WBS-value is the best known among poly-Hilbert curves, and the \WLEuc-value is within 1\% of the best known among poly-Hilbert curves.
Like curves A26.2b.b3 and F, the Neptunus curve is maximally bending: it does not contain bends larger than $\sqrt{2}$.

\paragraph{$L_2$-locality, $L_1$-locality, and surface ratio}
Figure~\ref{fig:di-curves}(b) shows my Luna curve, a poly-Hilbert curve of two rules with $\WLEuc = 18.33, \WLMan = 75.60$, and $\WS = 1.70$. This is only slightly worse than the best poly-Hilbert curves with three rules I found, which have $\WLEuc = 18.11$ and $\WLMan = 75.39$, respectively. 
The Luna curve is maximally bending: it does not contain bends larger than $\sqrt{2}$. It is also partially downward-compatible: it has two-dimensional Hilbert orders on the top and the front face.

\begin{figure}
\centering
\includegraphics[width=\hsize]{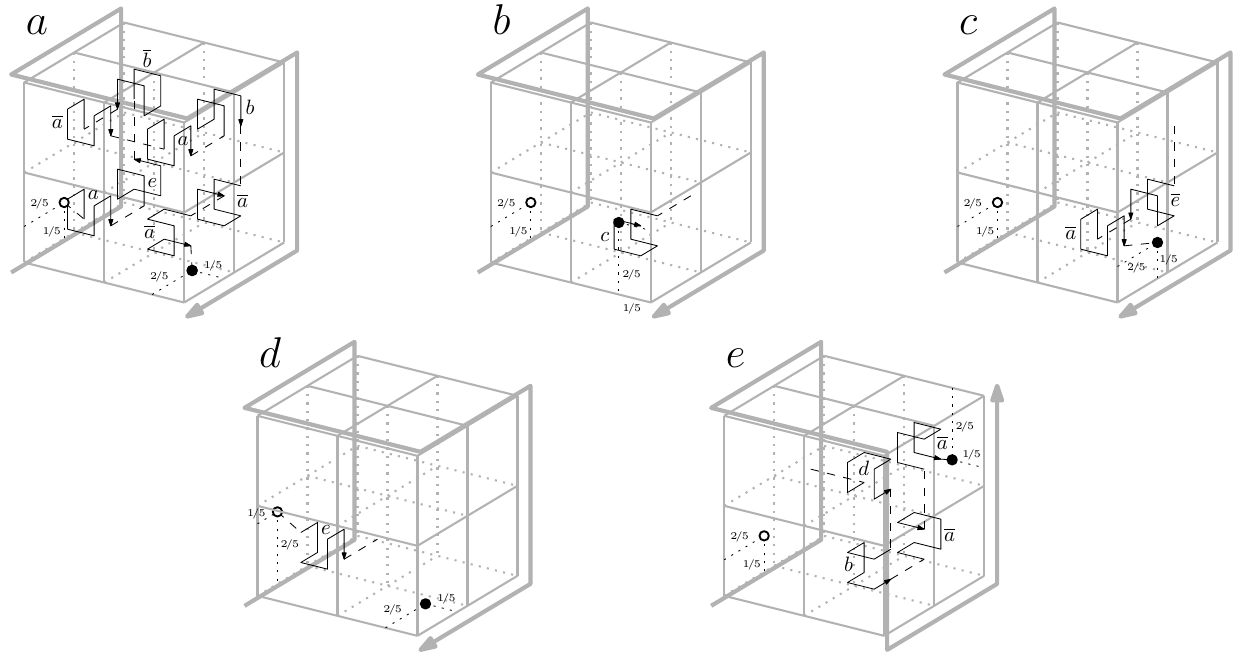}
\caption{The Iupiter curve: a curve with the best-known \WBV-value among poly-Hilbert curves. The octants that are left unspecified should be filled in as in rule $a$.}
\label{fig:jupiter}
\end{figure}

\paragraph{Bounding-box volume ratio}
The five-rule, face-continuous, face-gated Iupiter curve in Figure~\ref{fig:jupiter} is my current attempt at a scanning order with optimal worst-case bounding box volume ratio. With a value of 2.89 this ratio must be quite close to optimal: 
a lower bound of 2.76 for poly-Hilbert curves was found by exhaustively searching all octant-wise traversals of a cube down to the third level of recursion (a $8 \times 8 \times 8$ grid). The Iupiter curve was designed by deriving necessary structural properties for scanning orders with bounding box volume ratios below $84/29 \approx 2.90$ and $448/155 \approx 2.89$, respectively, and then designing a curve that has the first set of properties but, unfortunately, failing to guarantee the second set of properties.
Therefore, if I did not overlook some cases in the tedious manual design process, the bounding box volume ratio of this scanning order must actually be optimal for poly-Hilbert curves.

\section{Conclusions}

\begin{sidewaystable}\label{tab:results}
\caption{Properties and quality measures of space-filling curves discussed in this paper.}
\def\no{}
\def\half{\ensuremath{\circ}}
\def\yes{\ensuremath{\bullet}}

\addvspace\baselineskip
\def\arraystretch{1.25}
\begin{tabularx}{\hsize}{ll|@{\,}c@{\,}c@{\,}c@{\,}c@{\,}|@{ }c@{ }c@{ }r@{ }|@{\,}c@{\,}|@{\,}c@{\,}c@{\,}|X}
curve      &Fig.&O&D&S&C& \WLMax     & \WLEuc & \WLMan & \WS & \WBV & \WBS &
notes \\
\hline
A16.0000\,0000.0011\,1100 & \ref{fig:decomposable-curves} & \yes  &       & \yes  & \yes  & 25.2 & 27.1   & 112.0  & 1.74 & 3.71 & 5.79 &
opt \WS mH-curves
\\
A26.0000\,0000.0000\,0000 & \ref{fig:A26-curves}(b)       & \yes  & \no   & \half & \yes  & 24.2 & 26.2   & 98.3   & 1.80 & 3.11 & 5.56 &
opt \WBV mH-curves
\\
A26.0000\,0000.0011\,0110 &                               & \yes  & \no   & \no   & \yes  & 26.0 & 29.0   &  90.5  & 1.83 & 3.25 & 5.96 &
opt \WLMan OmH-curves
\\
A26.0000\,0000.1001\,1001 &                               & \yes  & \yes  & \half & \yes  & 24.2 & 26.4   &  99.6  & 1.80 & 3.11 & 5.56 &
opt \WBV mH-curves
\\
A26.0000\,0000.1101\,1011 & \ref{fig:A26-curves}(c)       & \yes  & \yes  & \half & \yes  & 28.0 & 28.9   & 99.6   & 1.80 & 3.50 & 6.43 &
2D-curves on five sides
\\
B\hphantom{0}7.0000\,0000.0110\,1110  &
                                                          & \yes  & \no   & \no   & \half & 15.7 & 24.6   & 113.9  & 1.95 & 4.31 & 4.82 &
opt \WBS OmH-curves
\\
B88.0000\,0000.1000\,0001 &
                                                          & \yes  & \no   & \half & \half & 16.6 & 23.6   & 111.2  & 2.06 & 4.20 & 4.95 &
opt \WLEuc OmH-curves
\\\hline
A26.0010\,1011.1011\,0011 & \ref{fig:reversing-curves}(a) & \half & \no   & \no   & \yes  & 12.4 & 22.9   &  99.6  & 1.80 & 3.11 & 4.40 &
opt \WBV and \WLMax mH-curves
\\
F          & \ref{fig:reversing-curves}(b)                & \half & \no   & \no   & \yes  & 14.0 & 18.6   & 89.8   & 1.80 & 3.14 & 4.05 &
opt \WBS, \WLEuc and near-opt \WLMan mH-curves
\\\hline
Neptunus   & \ref{fig:di-curves}(a)                       & 2     & \no   &  \no  & \yes  & 9.45 & 18.3   & 88.9   & 1.71 & 3.11 & 3.53 &
opt \WLMax, best known \WBS pH-curves
\\
Luna       & \ref{fig:di-curves}(b)                       & 2     & \hphantom{\half} &  \no  & \yes  & 14.0 & 18.3   & 75.6   & 1.70 & 3.11 & 4.05 &
best known \WS; near-best known \WLEuc, \WLMax pH-curves
\\
$\mathrm{H}^*$      & \cite{Chochia}                      & 10    & \hphantom{?} &  \no  & \yes  & 14.0 & 20.4   & 84.0   & 1.74 & 3.11 & 4.05 &
\\
Iupiter    & \ref{fig:jupiter}                            & 5     & \no   & \no   & \yes  & 17.0 & 24.9   & 88.7   & 1.76 & 2.89 & 4.92 &
best known \WBV pH-curves
\\\hline
(no curve) &       &       &       &       &       & 8.25       & 11.1   & 42.625 &        &      &      &
lower bounds cube-filling curves
\end{tabularx}

\addvspace\baselineskip
O: \half\ simple; \yes\ order-preserving; number: number of rules of poly-Hilbert curve;\\
D: 
   \yes\ downward-compatible;\\
S: \half\ symmetric; \yes\ symmetric and binary-decomposable;\\
C: \half\ edge-continuous; \yes\ face-continuous;\\
Quality measures have been computed to (at least) the indicated precision.
\end{sidewaystable}

Hilbert curves can be generalized to three dimensions in multiple ways. In this paper we have seen that interesting generalized Hilbert curves exist that had not been considered in the literature before, including a curve that traverses most faces of the unit cube in the order of two-dimensional Hilbert curves.

We saw curves that are not face-continuous, curves that are not vertex-gated, and curves that use reversed and/or multiple recursive rules. Especially the use of multiple recursive rules, sometimes in combination with gates on faces, results in improved quality measures---see Table~\ref{tab:results} for an overview. Compared to symmetric, non-reflecting, face-continuous, order-preserving mono-Hilbert curves, poly-Hilbert curves with two recursive rules can bring down $\WLMax$ from 24.2 to 9.45, $\WLEuc$ from 26.2 to 18.3, and $\WBS$ from 5.56 to 3.53. All of these results are due to maximally bending curves: curves without bends larger than $\sqrt{2}$. Our calculations on this extended set of curves have reduced the gaps between lower bounds and upper bounds on the locality measures of the best cube-filling curves:
the gap for $\WLMax$ by 94\%, 
for $\WLEuc$ by 68\%,         
and for $\WLMan$ by 41\%.     
For $\WS$, $\WBV$ and $\WBS$ we now have upper bounds of 1.70, 2.89 and 3.53, respectively.

In fact, we saw that we have found a poly-Hilbert curve of which the $L_\infty$-locality (9.45) is optimal among all poly-Hilbert curves. Unfortunately, proofs of optimality or near-optimality for $\WLEuc$ and $\WLMan$ seem much harder to get; the necessary case analysis quickly becomes too complicated to be informative, especially when curves that are not vertex-gated are taken into account.



\paragraph*{Acknowledgements}
Many computations of the locality measures reported in this paper were made possible by the work of Simon Sasburg~\cite{Sasburg}. His improvements of the algorithm from Haverkort and Van Walderveen resulted in the speed and accuracy needed to analyse the locality and bounding-box quality measures of millions of curves. Moreover, he designed and implemented the computation of the surface ratios.

\bibliographystyle{abbrv}

\break
\appendix

\section{Possible locations of gates in mono-Hilbert curves}

The following lemmas hold for all three-dimensional mono-Hilbert curves, regardless whether they are order-preserving or not.

\begin{lemma}\label{lem:vertex-vertex-gates}
For gates at vertices, we may assume that the entrance gate is at $(0,0,0)$ and the exit gate is either at $(1,0,0)$ or at $(0,1,1)$.
\end{lemma}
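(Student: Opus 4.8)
The plan is to classify the unordered pairs of cube vertices $\{e,x\}$ that can serve as entrance and exit gates, working modulo the group of rotations and reflections of the cube (and reversal, which simply renders the pair unordered). First I would record that this symmetry group, of order $48$, acts transitively on the eight vertices, so a suitable transformation places the entrance at $e=(0,0,0)$. The stabilizer of $(0,0,0)$ is the order-$6$ subgroup that permutes the three coordinate axes. Under this subgroup the remaining seven vertices fall into exactly three orbits, distinguished by their $L_1$-distance from the origin: the three edge-neighbours $\{(1,0,0),(0,1,0),(0,0,1)\}$, the three face-diagonal neighbours $\{(1,1,0),(1,0,1),(0,1,1)\}$, and the single space-diagonal vertex $(1,1,1)$. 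Choosing orbit representatives lets me assume $x=(1,0,0)$ in the first case and $x=(0,1,1)$ in the second. It therefore remains only to exclude the space-diagonal case $x=(1,1,1)$.

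The crux of the lemma is this exclusion, and I would settle it with a short propagation argument through one level of the recursion. The key observation is that rotating, reflecting, or reversing the base rule all preserve the property that the two gates sit at diametrically opposite corners of a cube. Hence, if the gates of the unit cube are opposite corners, then in every octant the scaled copy of the rule likewise has its entrance and exit at the two opposite corners of that octant.

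Now I would trace the gate sequence $g_0,\dots,g_8$, where $g_0=e$ and $g_8=x$ are the cube gates and each intermediate $g_i$ (for $1\le i\le 7$) is the vertex shared by octant $i$ and octant $i+1$. The first octant contains the cube corner $e$ among its own corners and enters there; by the opposite-corner property it must exit at the corner of that octant diametrically opposite to $e$, namely the cube centre, so $g_1=(\tfrac12,\tfrac12,\tfrac12)$. The second octant then enters at the centre, which is one of its corners, and must exit at its opposite corner---which is a corner of the whole cube. But $g_2$ is an intermediate gate and so must be a vertex common to two \emph{distinct} octants, octant $2$ and octant $3$, whereas a cube corner is a vertex of only one octant. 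This contradiction rules out the space diagonal. The main obstacle is precisely isolating this opposite-corner invariant and recognizing that it forces an intermediate gate onto a cube corner; once that is in hand, the remainder is the routine symmetry bookkeeping of the first paragraph.
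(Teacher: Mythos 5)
Your proof is correct and takes essentially the same route as the paper: the paper likewise disposes of the non-diagonal cases by rotation/reflection and excludes the space-diagonal pair by observing that the gate between the first and second octants would then be the cube's centrepoint, forcing the second octant's exit gate onto a vertex of the unit cube where no third octant can connect. Your explicit orbit bookkeeping under the vertex stabilizer and your statement of the opposite-corner invariant merely spell out what the paper leaves implicit.
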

\begin{proof}
The unit cube can always be rotated and/or reflected so that the above assumptions hold, unless the entrance and exit gates are at vertices of the unit cube that are opposite of each other with respect to the centrepoint of the cube. However, in that case the gate from the first to the second octant would be at the centrepoint of the unit cube, and the exit gate of the second octant, no matter which octant it is, would be at a vertex of the unit cube, where no connection to a third octant would be possible. Hence no connection scheme with gates at opposite vertices could be completed.
\end{proof}

\begin{lemma}\label{lem:edge-edge-gates}
For gates in the interior of edges, we may assume that the entrance gate is at $(\frac13,0,0)$ and the exit gate is at $(1,\frac13,1)$.
\end{lemma}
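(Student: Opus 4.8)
The plan is to argue in two stages: first pin down the \emph{position} of each edge-gate along its edge, then pin down the \emph{relative placement} of the two edges, using cube symmetries to normalise and connection-scheme completability to eliminate the remaining cases.

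For the first stage I would exploit self-similarity, in the spirit of the fixed-point reasoning implicit in Lemma~\ref{lem:vertex-vertex-gates}. The cube's entrance gate $e$ coincides with the entry point of the first octant, and the first octant is a half-scale copy of the single rule. Writing $T_1$ for the similarity (contraction ratio $\tfrac12$) that carries the unit cube onto the first octant, the entry point of that octant is $T_1(e)$ when the octant is used forward, so $e=T_1(e)$: the entrance is the unique fixed point of $T_1$. Since $e$ lies in the \emph{interior} of an edge $E$, the map $T_1$ must carry a half of $E$ into $E$ while \emph{reversing} orientation along it; an orientation-preserving half-scale map would fix only the endpoint of $E$, giving a vertex rather than an edge gate. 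A one-dimensional computation then places $e$ at the $\tfrac13$-point of $E$, while the two transverse coordinates are halved without reflection, so they sit at $0$ or $1$ and $E$ is a genuine edge of the cube. The same argument applied to the last octant places the exit at a $\tfrac13$-point. For curves that are \emph{not} order-preserving the first or last octant may be reversed, in which case $e=T_1(s)$ (with $s$ the exit gate) rather than $e=T_1(e)$, coupling the two positions; here I would check that the only value consistent with \emph{both} gates landing in the interior of cube edges (and with a completable scheme) is again $\tfrac13$, the competing value $\tfrac16$ produced by the ``wrong'' edge-orientation being killed at the next stage.

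In the second stage I would first apply a symmetry of the cube to move the entrance to $(\tfrac13,0,0)$; its stabiliser is the order-two group generated by the reflection swapping the $y$- and $z$-axes, and this residual freedom is all that remains for normalising the exit. The exit is one of the remaining $\tfrac13$-points of the twelve edges, so up to the residual symmetry there are only finitely many candidate placements, organised by how the exit edge sits relative to the entrance edge: parallel and coplanar, parallel and diagonal, meeting it at a shared vertex, or skew to it. I would eliminate every class except the skew one, and within the skew class every placement except $(1,\tfrac13,1)$, by the same style of obstruction as in Lemma~\ref{lem:vertex-vertex-gates}: fixing the first octant (for an order-preserving curve, the corner octant at $(0,0,0)$ reflected along the entrance edge) determines the gate handed to the second octant, and for the bad placements this gate, propagated by the self-similar structure, is forced either onto a cube vertex or into a position from which no vertex-continuous continuation to a third octant exists.

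The main obstacle is this second-stage case analysis and, in particular, its interaction with the coupling from reversed octants noted above. For order-preserving curves the two gates are independent fixed points and the elimination is a clean finite check; once reversals are allowed, the entrance constrains the exit through the map $T_1$ (or $T_k$), so the non-completability argument must be run while simultaneously excluding the spurious $\tfrac16$-positions left open in the first stage. The payoff is that exactly one orbit survives, with canonical representative entrance $(\tfrac13,0,0)$ and exit $(1,\tfrac13,1)$, which is the assertion of the lemma.
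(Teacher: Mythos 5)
Your first stage is essentially the paper's own computation: matching the cube's entrance gate to the first octant's entrance gate gives $a = a/2$ (impossible) or $a = \frac12 - \frac12 a$, hence $a = \frac13$, and the same for the exit when the last octant is forward. Your second-stage elimination of the candidate exit positions at $\frac13$-points (skew vs.\ parallel vs.\ incident edges, with propagation of $g_1, g_2, \dots$ forcing gates onto cube vertices or unreachable positions) also matches the paper's style of argument, though both in your text and in the paper this is a case-by-case check.

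The genuine gap is exactly the case you wave off: entrance at a $\frac13$-point but exit at a $\frac16$-point (i.e.\ $a \neq z$, which forces the rule to be reversed in every second octant). You promise this will be ``killed at the next stage,'' but your second stage as described enumerates only the $\frac13$-points of the twelve edges, so the $\frac16$-positions never actually enter your case analysis --- and there are up to two of them per edge, so the deferred check is not a small residue. More importantly, it is not clear that a local completability obstruction of the kind you use for the $\frac13$-cases suffices here; the paper needs a \emph{global} invariant to dispose of it. Specifically, with $a = \frac13$, $z = \frac16$, the paper considers a shortest path $\pi$ along the edge skeleton of the cube and its octants from $g_0$ to the vertex $p_8$ nearest the exit gate; $\pi$ has length $k/3$, an \emph{even} number of sixths. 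But chaining half-scale copies of $\pi$ through $p_1, g_2, p_3, \dots, p_7, g_8$ and appending the final segment of length $z = \frac16$ produces a path of length $(8k+1)/6$, and shortening such a path by removing cycles and rerouting preserves that its length is an \emph{odd} number of sixths --- a parity contradiction that rules out all $\frac16$-exits at once, on every edge simultaneously. This parity argument (or some substitute that handles the alternating-reversal traversal globally) is the substantive content of the lemma's proof, and it is the one idea your proposal is missing; without it, the coupling $e = T_1(s)$ you correctly identify leaves the $(a,z) = (\frac13,\frac16)$ family open.
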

\begin{proof}
Assume the unit cube has been rotated and/or reflected so that the entrance gate lies on the lower front edge, and the lower left front octant is the first octant in the scanning order. Let $a$ be the distance of the entrance gate to the lower left front vertex of the unit cube, and let $z$ be the distance of the exit gate to the closest vertex of the unit cube. Note that we must have $0 < a \leq 1/2$ and $0 < z \leq 1/2$. Let the gates of the octants be $g_0,...,g_8$, where the entrance gate of the $i$-th octant in the order is $g_{i-1}$, and its exit gate is $g_i$. Let $p_i$ be the vertex of the unit cube or one of its octants that lies closest to $g_i$.

We will first prove that $a = z$. To prove this, assume, for the sake of contradiction, that $a \neq z$. Then the entrance and exit gates of the octants can only be matched up if the order is reversed either (i) in every odd or (ii) in every even octant. We first analyse the second case. In this case, gate $g_0$ is the entrance gate of the unit cube, at distance $a$ from a vertex of the unit cube, and it is the entrance gate of the first octant, so it must be at distance $a/2$ from one of the vertices of the first octant. Therefore we must have either $a = a/2$ (which has no solutions satisfying $a > 0$) or $a = 1/2 - a/2$, which solves to $a = 1/3$. The last octant is reversed, so gate $g_8$ was originally an entrance gate, and this gate is the exit gate of the unit cube. Therefore we must have $z = \frac12 - \frac12a = 1/3$ or $z = \frac12a = 1/6$. Under the assumption $a \neq z$, the latter must be the case.
Now consider the length of a shortest path $\pi$ from $g_0$ to $p_8$ along the edges of the unit cube and its octants. The length of $\pi$ is $k/3$, for some $k \in \Naturals$. From $g_0$ to $p_1$ we could follow a scaled-down version of $\pi$, of length $k/6$. From $p_1$ to $g_2$ takes another $k/6$. We could continue our path in a similar manner to $p_3$, $g_4$, $p_5$, $g_6$, $p_7$ and $g_8$, and finally add a section of length $z$ to get to $p_8$. The total length of this path from $g_0$ to $p_8$ is $(8k+1)/6$. If this is not the shortest path from $g_0$ to $p_8$, we can shorten it by removing cycles and rerouting sections of it, but these operations will maintain that the path has length $m/6$ for an \emph{odd} number $m$. This contradicts the observation that $\pi$ has length $k/3$, that is, $m/6$ for an \emph{even} number $m$. Hence we cannot have $a \neq z$ and we must have $a = z$. In case (i), when every odd octant is reversed, we can argue in a symmetric way, following the curve from gate $g_8$ to $g_0$ instead of from $g_0$ to $g_8$, again finding that we must have $a=z$.

Now, with $a = z$, to match the entrance gate of the unit cube to the entrance gate of its first octant, we must have either $a = a/2$ (which has no solutions satisfying $a > 0$) or $a = 1/2 - a/2$, that is, $a = 1/3$. Thus we get $a = z = 1/3$, and the entrance gate of the unit cube is at $(\frac13,0,0)$. The exit gate cannot be in the same octant, and by similar arguments as in the proof of Lemma~\ref{lem:vertex-vertex-gates}, the exit gate cannot be in the opposite octant with respect to the centrepoint of the unit cube. The exit gate cannot lie on one of the edges that share vertex $(0,0,0)$, because then the $L_\infty$-distance between the entrance and the exit gate would be at most $2/3$; the $L_\infty$-distance between $g_7$ and $g_8$ would thus have to be half of that (at most $1/3$), but the $L_\infty$-distance between the seventh octant and $g_8$ would be $1/2$: a contradiction. By a symmetric argument, the exit gate cannot lie at $(1,\frac13,0)$ or $(1,0,\frac13)$; one would get a contradiction regarding the distance between $g_0$ and the second octant. The exit gate cannot lie on an edge parallel to the edge that contains the entrance gate, because then the curve could never make a connection from the left half to the right half of the unit cube. Note that none of the gates $g_1,...,g_7$ can lie on an edge of the unit cube, because there they could not be an exit gate of one octant and an entrance gate of another. Therefore the exit gate cannot lie at $(1,\frac23,0)$ or $(1,0,\frac23)$: $g_1$ would then lie at $(0,\frac13,0)$ (modulo swapping $y$ and $z$), which is on an edge of the unit cube. The exit gate cannot lie at $(0,\frac13,1)$ or $(0,1,\frac13)$, because then $g_1$ would have to lie at $(\frac12,\frac12,\frac16)$ (modulo swapping the $y$- and $z$-coordinates) and $g_2$ would have to lie at $(\frac12\pm\frac16, \frac12\pm\frac12,0)$ (modulo swapping the $x$- and $y$-coordinates), which is on an edge of the unit cube. With the entrance gate at $(\frac13,0,0)$, the only remaining places for the exit gate are therefore at $(1,\frac13,1)$, $(1,1,\frac13)$, $(0,\frac23,1)$ or $(0,\frac23,1)$; these cases are all the same modulo rotation, reflection and/or reversal.
\end{proof}

\begin{lemma}\label{lem:face-face-gates}
For gates in the interior of faces, we may assume that the entrance gate is at $(0,\frac13,\frac13)$ and the exit gate is at $(\frac23,\frac13,0)$.
\end{lemma}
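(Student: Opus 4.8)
The plan is to mirror the proof of Lemma~\ref{lem:edge-edge-gates}, adapting its one-dimensional reasoning to the two-dimensional geometry of a face. First I would normalise: by rotating and reflecting the unit cube, assume that the entrance gate lies in the interior of the left face $x=0$ and that the first octant in the scanning order is one of the four octants meeting that face. Let $T_1$ be the similarity that maps the unit cube onto this first octant. Since the curve enters the unit cube and the first octant at the same point, self-similarity forces the entrance gate to be the unique fixed point of $T_1$. Because $T_1$ must map the plane $x=0$ onto itself, it acts on the entrance face as a contraction by $\tfrac12$ composed with an element of the dihedral group of the square (the rotations about the $x$-axis and the reflections in planes through the $x$-axis); I would list these eight maps and compute the fixed point of each.

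This enumeration yields three kinds of fixed point. The identity, the two axis reflections, and one diagonal reflection place the fixed point at a corner or an edge-midpoint of the face; these are discarded because the gate must lie in the interior of a face. The $180^\circ$ rotation and the anti-diagonal reflection both give the desired thirds-position $(\tfrac13,\tfrac13)$ in face coordinates. The two $90^\circ$ rotations, however, give the fifth-positions $(\tfrac25,\tfrac15)$ and $(\tfrac15,\tfrac25)$, and these are the crux of the lemma: each is a perfectly self-consistent fixed point for the first octant in isolation, so---unlike in the edge case, where the only face-preserving maps are $x\mapsto x/2$ and $x\mapsto \tfrac12-x/2$ and thirds are forced immediately---they cannot be eliminated locally.

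To exclude the fifth-positions I would argue globally, in the spirit of the parity argument used to prove $a=z$ in Lemma~\ref{lem:edge-edge-gates}. A $90^\circ$ rotational component in $T_1$ must be propagated consistently around the whole chain of gates $g_0,\dots,g_8$: each intermediate gate satisfies $T_i(g_{\mathrm{out}})=T_{i+1}(g_{\mathrm{in}})=g_i$, and I would express the coordinates of the $g_i$ in terms of the (still unknown) entrance and exit positions and track how the rotational twist accumulates. The goal is to show that the eight octant maps cannot be chosen so that every intermediate gate lands in the interior of a shared face while closing up consistently after eight steps; the rotation obstructs the matching in the same way that an odd path length obstructed the edge case. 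I expect this to be the main obstacle, and the part requiring the most care---especially since, as in Lemma~\ref{lem:edge-edge-gates}, one must separately treat the possibility that the order is reversed in the odd or in the even octants.

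Once the entrance gate is pinned at $(0,\tfrac13,\tfrac13)$, locating the exit gate is largely a repeat of the distance arguments already used. By the centre-point argument of Lemma~\ref{lem:vertex-vertex-gates}, the exit cannot lie in the octant opposite the first, and it clearly cannot lie in the first octant itself. Faces lying too close to the entrance are ruled out by comparing the $L_\infty$-distance between the two unit-cube gates with the much smaller $L_\infty$-distance that the octant-level gates $g_1$ and $g_7$ would then be forced to have, exactly as in Lemma~\ref{lem:edge-edge-gates}; and any candidate whose fixed point would drive an intermediate gate onto an edge of the unit cube---where it could not be simultaneously an exit of one octant and an entrance of the next---is discarded. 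After applying all symmetries (rotation, reflection and reversal), I would check that the single surviving exit position is $(\tfrac23,\tfrac13,0)$, completing the proof.
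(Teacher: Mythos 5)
Your fixed-point enumeration correctly reproduces the paper's two candidate gate signatures: writing $a_1,a_2$ for the distances from the entrance gate to the two nearest edges of its face, matching the cube's entrance gate to the first octant's entrance gate yields either $(a_1,a_2)=(\frac13,\frac13)$ or $(a_1,a_2)=(\frac15,\frac25)$ --- exactly your thirds- and fifths-positions. But the proposal has a genuine gap precisely where you flag it: the elimination of the fifths case is not, in the paper, a parity-of-twist argument, and the failure modes that actually arise suggest it cannot be one. The paper rules out $(a_1,a_2)=(\frac15,\frac25)$ by enumerating the six possible positions of the gate $g_1$ between the first and second octants and refuting each by a different mechanism: in three cases $g_2$ is forced onto the exterior of the unit cube; in two cases the entire gate sequence is confined to a plane ($z=\frac15$ or $y=\frac25$), so half the cube is unreachable; and in the remaining case either an octant must be visited twice, or an explicit computation of the chain $g_2,\dots,g_8$ contradicts the forced position of $g_8$. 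An ``accumulated rotation'' obstruction would not detect the confinement-to-a-plane or visit-an-octant-twice failures, which occur with perfectly consistent rotational matching around the chain; so your plan, executed as stated, would not close the crucial case.

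A second gap: your framing takes the entrance gate to be a fixed point of $T_1$ from the outset, which holds only when the first octant is unreversed, and it silently assumes the entrance and exit signatures agree. The paper spends a substantial part of the proof on the possibility $(a_1,a_2)\neq(z_1,z_2)$: there, a parallel-faces argument (if $z_1\neq z_2$, the rotational matching is rigid and forces $g_0,g_2,g_4,g_6,g_8$ onto parallel faces, so the octants could only form a single row, never a $2\times2\times2$ block) yields $z_1=z_2$; then $z_1=z_2\in\{\frac12 a_1,\frac12-\frac12 a_1\}\cap\{\frac12 a_2,\frac12-\frac12 a_2\}$ forces $a_1=a_2=\frac13$, $z_1=z_2=\frac16$, and a check of the four possible positions of $g_1$ (at distance $\frac1{12}$ from the octant's edges) eliminates this too. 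You mention treating reversal parity ``separately'' but give no mechanism, so this branch is simply missing. Your final step is closer to the paper's, which rules out $g_1=(\frac16,\frac12,\frac16)$ because an $L_\infty$-gate-distance of $\frac16$ makes any third octant unreachable and then pins $g_8$ to $(\frac23,\frac13,0)$ modulo a $y$/$z$ swap --- but even there the paper relies on concrete gate-chain computations through the possible transformations of the first octant, not on the generic ``distance arguments'' you cite.
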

\begin{proof}
We argue in a similar manner as for gates on edges. Let $a_1 > 0$ and $a_2 > 0$ be the distance of the entrance gate to the closest edge and to the second-closest edge, respectively. Similarly let $z_1 > 0$ and $z_2 > 0$ be the distance of the exit gate to the closest edge and to the second-closest edge, respectively. Note that we have $a_1, a_2, z_1, z_2 < \frac12$, otherwise gates would have to lie on edges of octants.

We first make some general observations. Because the gates between octants are in the interior of faces, each octant must share a face with the previous one, and so each octant differs from the previous in only one coordinate. Therefore the last octant differs from the first one in an odd number of coordinates, that is, the last octant shares a face with the first one, or it is opposite to the first octant with respect to the centrepoint of the unit cube. The latter case cannot happen, by a similar argument as in the proof of Lemma~\ref{lem:vertex-vertex-gates}, so the last octant must share a face with the first octant. Furthermore, the exit gate cannot lie on a face parallel to the face that contains the entrance gate, because then $g_2$ would have to lie on the same face as $g_1$, or on the outside of the unit cube opposite to the face containing $g_0$: in both cases $g_2$ could not connect to any other octant than the first and the second.

We will now show that $(a_1,a_2) = (z_1,z_2)$. For the sake of contradiction, suppose $(a_1,a_2) \neq (z_1,z_2)$. Then we can again argue that the entrance gate of the unit cube is also the entrance gate of the first octant, and therefore we must have either:\begin{itemize}
\item $a_1 \in \{\frac12 a_1, \frac12 - \frac12 a_1\}$ and $a_2 \in \{\frac12 a_2, \frac12 - \frac12 a_2\}$, which solves to $a_1 = a_2 = 1/3$;
\item or $a_1 \in \{\frac12 a_2, \frac12 - \frac12 a_2\}$ and $a_2 \in \{\frac12 a_1, \frac12 - \frac12 a_1\}$, which solves to $a_1 = 1/5$, $a_2 = 2/5$.
\end{itemize}
Now observe that $g_1$ is the exit gate of the unit cube as it is scaled down and possibly rotated and reflected to fit in the first octant. At the same time it is the exit gate of the unit cube that is scaled down, reversed, and possibly rotated and reflected to fit in the \emph{second} octant. If $z_1 \neq z_2$, the rotations and reflections can only match in one way, so that $g_2$ must be on a face of the second octant parallel to the face of the first octant that contains $g_0$. Continuing this reasoning, we would find that $g_0$, $g_2$, $g_4$, $g_6$ and $g_8$ are all on parallel faces. But if $g_0$ and $g_8$ are on parallel faces, then we could only connect octants in a single row and never connect eight octants in a block of $2\times2\times2$. Therefore, if $(a_1,a_2) \neq (z_1,z_2)$, we must have $z_1 = z_2$. Since the exit gate of the unit cube is also the exit gate of the last octant (which is really a reversed entrance gate), we now must have $z_1 = z_2 \in \{\frac12 a_1, \frac12 - \frac12 a_1\} \cap \{\frac12 a_2, \frac12 - \frac12 a_2\}$. This only has a solution if $a_1 = a_2 = 1/3$; then $z_1 = z_2 = 1/6$. Without loss of generality, assume the entrance gate $g_0$ is at $(0,\frac13,\frac13)$. By the observations made above, $g_1$ must now lie on a suboctant of the first octant which is face-adjacent to the suboctant that contains $g_0$, but not on a face of the first octant opposite to $g_0$; the gate $g_1$ must lie at a distance of $z/2 = 1/12$ from the edges of the first octant. This leaves four possible positions for $g_1$: $(\frac1{12},\frac12,\frac1{12})$ and $(\frac5{12},\frac12,\frac5{12})$, and the same with $y$- and $z$-coordinates swapped. One may now verify that in both cases, the exit point of the second octant would now have to lie on the outside of the unit cube, where no connection to the third octant is possible. Hence we must have $(a_1,a_2) = (z_1,z_2)$.

Now, with $(a_1,a_2) = (z_1,z_2)$, to match the entrance gate of the unit cube to the entrance gate of its first octant, we must have either $(a_1,a_2) = (\frac15,\frac25)$ or $(a_1,a_2) = (\frac13,\frac13)$. We will now analyse these cases one by one.

The case of $(a_1,a_2) = (\frac15,\frac25)$: assume without loss of generality that $g_0$ is at $(0,\frac25, \frac15)$, which is in the lower left back suboctant of the lower left front octant. By the observations made above, $g_1$ must now be on the back face of the upper left back suboctant, the back face of the lower right back suboctant, or the top face of the upper left back suboctant. More precisely, $g_1$ must be one of: $(\frac3{10},\frac12,\frac1{10})$, $(\frac25,\frac12,\frac15)$, $(\frac1{10},\frac12,\frac3{10})$, $(\frac15,\frac12\,\frac25)$, $(\frac1{10},\frac3{10},\frac12)$, $(\frac15,\frac25,\frac12)$. None of these cases work out: in the first, third and fifth case, $g_2$ is forced onto the outside of the unit cube; in the second case, the sequence of gates $g_0, g_1, ...$ is forced to remain in the plane $z = \frac15$ and the upper half of the unit cube cannot be reached; in the sixth case, the sequence of gates $g_0, g_1, ...$ is forced to remain in the plane $y = \frac25$ and the back half of the unit cube cannot be reached; in the fourth case, if the first octant is not reversed, the transformation that maps the first octant and its entrance gate to the unit cube and its entrance gate, will map the exit gate $g_1$ at $(\frac15,\frac12,\frac25)$ to $g_8$ at $(\frac25,\frac45,0)$, but then the curve would have to visit the lower left back octant twice (once reaching it through $g_1$; once leaving it through $g_8$); if, in the fourth case, the first octant is reversed, the transformation that maps the first octant and its exit gate to the unit cube and its entrance gate, will map the entrance gate $g_0$ at $(0,\frac25,\frac15)$ to $g_8$ in the upper left front octant at $(\frac15,0,\frac35)$; however, the only way of connecting transformations of the first octant up such that the traversal ends in the upper left front octant, leads over $g_2 = (\frac25,\frac7{10},\frac12)$, $g_3 = (\frac12,\frac9{10},\frac7{10})$, $g_4 = (\frac35,\frac7{10},\frac12)$, $g_5 = (\frac45, \frac12, \frac25)$, $g_6 = (\frac35,\frac3{10},\frac12)$ and $g_7 = (\frac12,\frac1{10},\frac7{10})$ to $g_8$ at $(\frac3{10},0,\frac9{10})$, contradicting $g_8 = (\frac15,0,\frac35)$.

The case of $(a_1,a_2) = (\frac13,\frac13)$: assume without loss of generality that $g_0$ is at $(0,\frac13,\frac13)$. By the observations made above, $g_1$ must be on the top face or the back face of the lower left octant; these are equivalent modulo a swap of the $x$- and $y$-coordinates, so we may assume $g_1$ is on the back face of the lower left octant; more precisely $g_1$ must be at $(\frac16,\frac12,\frac16)$ or $(\frac13,\frac12,\frac13)$. In the first case, the $L_\infty$-distance between the entrance and exit gates of an octant is only $\frac16$, and thus $g_2$ cannot lie on the boundary of any octant other than the first and the second; any possible third octant is out of reach. Hence $g_1$ must be at $(\frac13,\frac12,\frac13)$. Considering the possible transformations of the first octant (similar to our analysis of the fourth case with $(a_1,a_2) = (\frac15,\frac25)$), we now find that $g_8$ must be at $(\frac23,\frac13,0)$ or $(\frac23,0,\frac13)$; by swapping the $y$- and $z$-coordinates we can ensure that $g_0$ is at $(0,\frac13,\frac13)$ and $g_8$ is at $(\frac23,\frac13,0)$.
\end{proof}

\begin{lemma}\label{lem:vertex-edge-gates}
For one gate at a vertex and one gate in the interior of an edge, we may assume that the entrance gate is at $(0,0,0)$ and the exit gate is at $(1,\frac12,0)$.
\end{lemma}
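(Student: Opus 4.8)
The plan is to follow the template of Lemmas~\ref{lem:vertex-vertex-gates} and~\ref{lem:edge-edge-gates}: first fix the vertex gate, then pin the exit to a single edge midpoint by a parity argument on the chain of octant gates, and finally normalise the edge. By a rotation and/or reflection of the cube I would place the entrance gate (the vertex gate) at $(0,0,0)$, so that the first octant is $[0,\frac12]^3$. As in the earlier lemmas, write $g_0,\dots,g_8$ for the gates of the octants, with $g_{i-1}$ and $g_i$ the entrance and exit of the $i$-th octant in the scanning order, so that $g_0=(0,0,0)$ and $g_8$ is the edge-interior exit of the cube.

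The key structural step is a colouring argument. Each octant is a scaled (possibly rotated, reflected and/or reversed) copy of the single rule, so its two gates occupy the images of the rule's vertex gate and of the rule's edge-interior gate; hence in every octant exactly one gate sits at a vertex of the octant (call it type $V$) and the other in the relative interior of an edge of the octant (type $E$). I would then check that this type is shared between the two octants joined by an internal gate: a type-$V$ gate is a grid point, and the only grid points lying on an octant are its corners, so it is a vertex of either octant it touches; a type-$E$ gate is a non-grid point (it sits at distance strictly between $0$ and $\frac12$ from an octant vertex) lying in the interior of an edge common to the two octants, so it is edge-interior for both. In neither case can the type change, so the types alternate along $g_0,\dots,g_8$. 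Since $g_0$ is type $V$ and there are eight octants, $g_8$ is again type $V$, i.e.\ a grid point; as $g_8$ lies in the interior of a cube edge, it must be that edge's midpoint, so the exit is at distance $\frac12$ from the nearest vertex. The same alternation forces the odd octants to be forward copies and the even octants to be reversed copies of the rule.

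It then remains to identify which edge of the cube carries the exit. Relative to the entrance vertex $(0,0,0)$ the twelve edge midpoints split into three classes under the stabiliser of $(0,0,0)$ in the symmetry group of the cube, which is the group $S_3$ permuting the three coordinate axes: the three midpoints of edges through $(0,0,0)$, the three midpoints of edges through the opposite vertex $(1,1,1)$, and the six ``middle'' midpoints (one remaining coordinate $0$, one coordinate $1$), on which $S_3$ acts transitively with representative $(1,\frac12,0)$. The first class I would exclude exactly as the paper excludes gates $g_1,\dots,g_7$ lying on cube edges: the first octant is a forward copy, so $g_1$ is the midpoint of an edge of $[0,\frac12]^3$ incident to $(0,0,0)$, and every such octant edge lies along a cube edge, where $g_1$ could not be both an exit gate and an entrance gate.

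The main obstacle is excluding the antipodal class, because there $g_1$ lands on an interior octant edge (for instance $(\frac12,\frac12,\frac14)$) and is not immediately illegal. I expect to handle it by the same kind of reasoning as in Lemma~\ref{lem:vertex-vertex-gates} together with the $L_\infty$-distance estimates of Lemma~\ref{lem:edge-edge-gates}: using the now-known forward/reversed pattern, trace the chain $g_1,g_2,\dots$ and show that an antipodal exit edge forces some intermediate gate either onto a cube edge (impossible for $1\le i\le 7$) or into a position from which the remaining octants cannot be reached, yielding a contradiction. Once both the origin and antipodal classes are ruled out, the exit edge is a middle edge, and a permutation of the coordinate axes maps the exit to $(1,\frac12,0)$, which completes the proof.
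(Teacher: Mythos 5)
Your proposal is correct and follows essentially the same route as the paper: the alternation of vertex-type and edge-type gates along $g_0,\dots,g_8$ (your colouring argument is a formalized version of the paper's observation that the rule must be reversed in every second octant) forces $g_8$ to be an edge midpoint, the origin-incident edges are excluded because $g_1$ would land in the interior of a cube edge where it cannot connect to a second octant, and the antipodal class is eliminated by a finite check on realizable gate sequences. Note that the paper likewise only asserts the antipodal case ``is easy to verify by hand,'' so your deferral of that case to a sketched finite case analysis matches the paper's own level of detail.
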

\begin{proof}
Without loss of generality, assume the entrance gate is at a vertex and the exit gate is in the interior of an edge. This implies that the rule must be reversed in every second octant to be able to match the gates between the octants. Thus the traversal ends at a \emph{vertex} of the eighth octant that lies in the interior of an edge of the unit cube---thus it actually lies in the middle of that edge. The exit gate cannot lie on an edge incident on the entrance gate, because then, after traversing the first octant within a cube, it would be impossible to connect to the second octant. The exit gate cannot lie at $(1,\frac12,1)$: the sequences of gates $g_0, g_1, ...$ that can be made by composing transformations of the unit cube with its gates are fairly limited and it is easy to verify by hand that they cannot be completed to end at $(1,\frac12,1)$. Therefore the exit gate must lie at $(1,\frac12,0)$ (modulo reflections and rotations).
\end{proof}

\begin{lemma}\label{lem:vertex-face-gates}
For one gate at a vertex and one gate in the interior of a face, we may assume that the entrance is at $(0,0,0)$ and the exit gate is at $(1,\frac12,\frac12)$.
\end{lemma}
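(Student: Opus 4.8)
The plan is to follow the same template established in Lemma~\ref{lem:vertex-edge-gates}, adapting each argument to the case where the exit gate lies in the interior of a \emph{face} rather than of an edge. First I would fix the coordinate frame so that the entrance gate is at a vertex, which we may take to be $(0,0,0)$, and note that the exit gate lies in the interior of some face of the unit cube. As in the vertex-edge case, the key structural consequence is that since the two gate types differ (vertex versus face-interior), the defining rule must be reversed in every second octant so that the scaled-down gates can be matched consistently between consecutive octants; this forces the traversal of the eighth octant to end at a \emph{vertex} of that octant which is interior to a face of the unit cube, hence at the centrepoint of that face. This pins down the two candidate positions $(1,\frac12,\frac12)$ and $(0,\frac12,\frac12)$ for the exit gate, modulo rotation and reflection about the entrance vertex.

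Next I would eliminate the position $(0,\frac12,\frac12)$, the centre of the face \emph{containing} the entrance gate. The obstruction here is the same one used in Lemma~\ref{lem:vertex-vertex-gates} and reused throughout: a connection scheme whose entrance and exit gates lie on the same face (or, in the vertex-vertex case, at antipodal vertices) cannot be completed, because after traversing the first octant the gate $g_1$ would be forced into a position from which no connection into the remaining octants is possible, or the curve would be trapped in a single slab and unable to reach all eight octants. I would spell out that if the exit gate sits on the same face as the entrance gate, then $g_1$ is forced onto that face or onto the parallel outer face, and in either case $g_2$ can only connect back to the first or second octant, so the scheme cannot be extended to cover a $2\times2\times2$ block.

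This leaves $(1,\frac12,\frac12)$, the centre of the face opposite along the $x$-axis, as the only viable position, giving exactly the claimed combination of gates. As in the preceding lemma, the cleanest way to finish is to observe that the sequences of intermediate gates $g_0, g_1, \ldots, g_8$ obtainable by composing the (rather constrained) transformations of the unit cube with its prescribed gates are few enough to be checked by hand: I would verify that a valid sequence ending at $(1,\frac12,\frac12)$ does exist, and that the excluded candidate on the entrance face admits no completion.

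I expect the main obstacle to be the finite but somewhat fiddly hand verification that ties the coordinate constraints together---specifically, confirming that the forced reversal pattern (reverse in every second octant) is genuinely the only way to match a vertex gate to a face-interior gate, and then checking exhaustively that the resulting gate sequences rule out every position except $(1,\frac12,\frac12)$. Unlike the edge-edge and face-face lemmas, there is no length-parity or distance-counting invariant that does the work automatically, so the argument rests on a careful but elementary case enumeration of the admissible transformations, exactly as in the vertex-edge lemma on which this one is modelled.
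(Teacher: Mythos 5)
Your proposal is correct and follows essentially the same route as the paper's own proof: the mismatch of gate types forces the rule to be reversed in every second octant, placing the exit gate at the centre of a face, and the candidate on a face incident to the entrance vertex is excluded because the connection from the first octant to the second fails, leaving $(1,\frac12,\frac12)$ as the only position modulo rotation and reflection. One minor imprecision in your spelled-out elimination: the obstruction is immediate at $g_1$ rather than at $g_2$---every face of the first octant incident to the corner $(0,0,0)$ lies on the outer boundary of the unit cube, so $g_1$ admits no second octant at all---and the separate hand-verification that a valid scheme ending at $(1,\frac12,\frac12)$ exists, while true (the type-D schemes), is not needed for the lemma as stated.
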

\begin{proof}
Without loss of generality, assume the entrance gate is at a vertex and the exit gate is in the interior of a face. By a similar reasoning as in Lemma~\ref{lem:vertex-edge-gates}, this implies that the exit gate lies right in the middle of that face. The exit gate cannot lie in a face incident on the entrance gate, because then, after traversing the first octant within a cube, it would be impossible to connect to the second octant. Therefore the exit gate must lie at $(1,\frac12,\frac12)$ (modulo reflections and rotations).
\end{proof}

\begin{lemma}\label{lem:edge-face-gates}
There is no generalized simple Hilbert curve with one gate in the interior of an edge and one gate in the interior of a face.
\end{lemma}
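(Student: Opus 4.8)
The plan is to argue, as in the proofs of Lemmas~\ref{lem:vertex-edge-gates} and~\ref{lem:edge-edge-gates}, that the mismatch between the two gate types is incompatible with the way a single rule can be reversed and reflected across eight octants. Without loss of generality I assume the entrance gate lies in the interior of an edge and the exit gate lies in the interior of a face, so relative to the unit cube the entrance gate has carrier dimension $1$ and the exit gate has carrier dimension $2$.

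First I would establish the reversal pattern. Label the octants $C_1,\dots,C_8$ in scanning order, with gates $g_0,\dots,g_8$ as before, and recall that an \emph{unreversed} application of the rule to an octant places a gate of carrier dimension $1$ at its entrance and a gate of carrier dimension $2$ at its exit, while a \emph{reversed} application swaps these; rotations and reflections preserve carrier dimension. Since the subdivision of a cube into eight octants is face-to-face, each interior gate $g_i$ has a single well-defined carrier dimension in the subdivision, which must equal both the exit dimension of $C_i$ and the entrance dimension of $C_{i+1}$. As unreversed octants have entrance/exit dimensions $(1,2)$ and reversed octants have $(2,1)$, this matching forces the orientation to alternate: the rule is reversed in every second octant. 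Because the cube's entrance gate lies on an edge of the unit cube and hence on an edge of $C_1$, it cannot be a (dimension-$2$) face gate of $C_1$, so $C_1$ is unreversed; since there are eight octants, $C_8$ is therefore reversed.

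Next I would pin down positions. As in Lemma~\ref{lem:edge-edge-gates}, matching the cube's entrance gate to the entrance gate of the unreversed octant $C_1$ forces it to sit at distance $\tfrac13$ from a cube vertex along its edge. Since $C_8$ is reversed, its exit gate---which \emph{is} the cube's exit gate---is really a copy of the (edge-type) entrance gate of the rule, so it lies on an edge of $C_8$ at distance $\tfrac16$ from a vertex of $C_8$; to be a face gate of the unit cube, this edge of $C_8$ must run through the interior of a face of the cube. This already confines the cube's face gate to a small number of candidate positions.

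The hard part is the final step: showing that no sequence $g_0,g_1,\dots,g_8$ obtained by composing the admissible transformations of the unit cube actually connects the edge entrance to such a face exit while visiting all eight octants of a $2\times2\times2$ block. I expect this to require a finite but tedious check of the reachable gate sequences, exactly in the spirit of the ``easy to verify by hand'' steps in Lemmas~\ref{lem:vertex-vertex-gates} and~\ref{lem:vertex-edge-gates}. The main obstacle is that, unlike the edge--edge case, there is no clean length- or parity-invariant available here: an interior edge gate can join either two edge-adjacent octants or two face-adjacent octants (with the gate on the shared edge), so the argument cannot simply count coordinate flips and must instead track the actual positions of $g_1,\dots,g_7$ and rule out each partial continuation directly.
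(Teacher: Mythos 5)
Your setup is sound and matches the paper's opening moves: the mismatch of gate types forces the rule to be reversed in every second octant, so the even-indexed gates $g_0,g_2,\dots,g_8$ are edge-type and the odd-indexed ones face-type, and the exit gate $g_8$ sits on an octant edge that runs through the interior of a face of the unit cube, i.e.\ on a centreline of that face (and not at its centrepoint). But there is a genuine gap: you stop exactly where the decisive argument begins, deferring the conclusion to ``a finite but tedious check of the reachable gate sequences,'' and your stated justification for this---that ``there is no clean length- or parity-invariant available here''---is precisely wrong. The paper's proof rests on such an invariant. Because each unreversed octant $C_{2i-1}$ and the following reversed octant $C_{2i}$ are connected back to back through the face gate between them, with matching rotations and at most a reflection, the edges carrying $g_0$ and $g_2$ must be parallel; continuing inductively, $g_0,g_2,g_4,g_6,g_8$ all lie on parallel edges. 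Taking the exit face to be, say, the back face with $g_8$ on (the top half of) its vertical centreline, every even gate lies on a vertical edge, and since the rule fixes the orientation of the exit face relative to the entrance edge, every odd (face-type) gate lies on a vertical face as well. Hence no gate of the traversal can ever lie in the interior of a horizontal face or edge of an octant---but any connection between the upper four and lower four octants would have to pass through exactly such a gate. The traversal is therefore trapped in one horizontal layer and can never visit all eight octants; no enumeration of gate sequences is needed.

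You are right that naive coordinate-flip counting is inconclusive here (an interior edge gate can join edge-adjacent octants or face-adjacent octants with the gate on an edge of the shared face), but the remedy is the directional parallelism invariant above, which is the exact analogue of the parallel-faces argument already deployed in the proof of Lemma~\ref{lem:face-face-gates}---not brute force. Incidentally, your computations of the entrance-gate position ($a=\tfrac13$) and the $\tfrac16$-offset of $g_8$, while correct and in the spirit of Lemma~\ref{lem:edge-edge-gates}, are never needed: the paper's contradiction is purely about edge and face orientations, not metric positions.
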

\begin{proof}
Assume the entrance gate is in the interior of an edge; the exit gate is in the interior of a face. This implies that the rule must be reversed in every second octant, and we end on a centreline of the face of the unit cube, but not in the centrepoint of the face. Without loss of generality, assume we end on the back face, in the top half of the vertical centreline. Then the first and the second octant must be connected back to back, possibly reflected but with matching rotations. Thus $g_0$ and $g_2$ must lie on parallel edges. Continuing this reasoning, we would find that $g_0$, $g_2$, $g_4$, $g_6$ and $g_8$ are all on parallel edges, hence $g_0$, the entrance gate of the unit cube, is on a vertical edge, parallel with the face that contains the exit gate. But then, tracing the sequence of octants back from the end, we find that we can never reach the horizontal faces and edges on which we could connect to the lower octants. Therefore it is impossible to realize a connection scheme with one gate in the interior of an edge and the other gate in the interior of a face.
\end{proof}

\skipsection{
\section{Order-preserving mono-Hilbert curves (OmH-curves) with optimal locality and/or bounding-box quality}

\paragraph{$L_\infty$-locality}
The OmH-curves with the best $L_\infty$-locality are the edge-continuous curves B86.00.0010\,1100 and B86.00.0110\,1100; these curves have $\WLMax = 15.58$. The best face-continuous OmH-curves have \WLMax 18.67 and are from connection schemes A16 and A18.

\begin{figure}
\centering
\hbox to \hsize{\hfill
(a)\includegraphics[scale=0.8]{B7-00-6e}\hfill
(b)\includegraphics[scale=0.8]{B88-00-81}\hfill
}
\caption{%
(a) B7.00.0110\,1110, a curve with optimal \WBS for order-preserving mono-Hilbert curves. The fifth octant may be reflected.\quad
(b) B88.00.1000\,0001, a curve with optimal \WLEuc for order-preserving mono-Hilbert curves.}
\label{fig:edge-continuous-curves}
\end{figure}

\paragraph{$L_2$-locality}
The OmH-curves with the best $L_2$-locality are the edge-continuous curves\\B6.00.0100\,0010, B88.00.1000\,0001 (Figure~\ref{fig:edge-continuous-curves}(b)), and B89.00.0100\,0001. These three curves have the same values for all computed measures, and in particular, their $L_2$-locality is 23.57. The best face-continuous OmH-curves have \WLEuc 24.58 and are from connection scheme A6.

%

\paragraph{$L_1$-locality}
The OmH-curves with the best $L_1$-locality are the face-continuous curves A26.00.*011*11* (for an example, see Figure~\ref{fig:A26-OmH-curves}(b)), where arbitrary digits could be substituted for the asterisks. These curves have $\WLMan = 90.50$.

\paragraph{Surface ratio}
The OmH-curves with the best surface ratio are the face-continuous curves A16.00.**11\,11** and A18.00.**10\,01** (for an example, see Figure~\ref{fig:decomposable-curves}). These curves have $\WS = 1.74$.

\paragraph{Bounding-box volume ratio}
The OmH-curves with the best bounding-volume ratio are the face-continuous curves A26.00.*0**\,**0*
, where arbitrary digits could be substituted for the asterisks. Their bounding-box volume ratio is 3.11.

\paragraph{Bounding-box surface ratio}
The OmH-curves with the best bounding-box surface ratio are the edge-continuous curves B3.00.011*\,*0*0, B7.00.011*\,1110 (Figure~\ref{fig:edge-continuous-curves}(a)), and B91.00.011*\,1101. Their bounding-box surface ratio is 4.82. The best face-continuous OmH-curves are a number of curves from connection scheme A6 with bounding-box surface ratio 5.17.

\section{Mono-Hilbert curves (mH-curves) with optimal locality and/or bounding-box quality}

\paragraph{$L_\infty$-locality}
The best $\WLMax$-value among mH-curves is 12.44, which is achieved by curve A26.0010\,1011.1011\,0011 exclusively (see Figure~\ref{fig:A26.2b.b3}(a)). For comparison, the best OmH-curve has \WLMax 15.58.

\paragraph{$L_2$-locality}
The best $\WLEuc$-value among mH-curves is 18.57; this is achieved by the F-curve (Figure~\ref{fig:reversing-curves}(b)). For comparison, the best OmH-curve has \WLEuc 23.57.


\paragraph{$L_1$-locality}
The best $\WLMan$-value among mH-curves is 89.74 (by many curves from the face-continuous connection scheme A26). This is only marginally better than the best OmH-curve, which has \WLMan 90.50.


\paragraph{Surface ratio}
The optimal \WS-value among mH-curves is 1.74, which is achieved by many curves from the face-continuous connection schemes A16, A17 and A18, including many OmH-curves (see above).

\paragraph{Bounding-box volume ratio}
The optimal \WBV-value among mH-curves is 3.11, which is achieved by the face-connected curve A026.0010\,1011.1011\,0011 (Figure~\ref{fig:A26.2b.b3}(a)) and the OmH-curves A26.00.*0**\,**0*, among many others.

\paragraph{Bounding-box surface ratio}
The optimal \WBS-value among mH-curves is 4.05, by the F-curve (Figure~\ref{fig:reversing-curves}(b)). For comparison, the best OmH-curve has \WBS 4.82.

}

\section{Proving \WLMax-optimality of the Neptunus curve among poly-Hilbert curves}

\begin{theorem}\label{th:WLMaxlbd}
No poly-Hilbert curve has $\WLMax < 189/20 = 9.45$.
\end{theorem}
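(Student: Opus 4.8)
The plan is to argue by contradiction: assume some poly-Hilbert curve has $\WLMax < 189/20$, and extract from this a scale-invariant \emph{reach bound} that the self-similar structure cannot satisfy. First I would use the fact, noted in Section~\ref{sec:measures}, that for $i=\infty$ the locality ratio and the diameter ratio coincide, and that the minimum bounding $L_\infty$-ball of a set is an axis-parallel cube whose side equals its $L_\infty$-diameter; hence $\WLMax$ equals the worst ratio $\mathrm{diam}_\infty(C(p,q))^3/\vol(C(p,q))$. Parametrising the curve by swept volume as $\gamma:[0,1]\to[0,1]^3$, so that $\vol(C(\gamma(s),\gamma(t)))=t-s$, the assumption $\WLMax<189/20$ becomes a H\"older-type condition $\|\gamma(s)-\gamma(t)\|_\infty < (189(t-s)/20)^{1/3}$; equivalently, to attain $L_\infty$-displacement $D$ the curve must sweep volume strictly greater than $20D^3/189$. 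Because the curve is defined self-similarly, this bound holds verbatim inside every cube of the recursion.

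Next I would fix the structural constraints. The top-level rule subdivides the unit cube into eight octants visited in some order $O_1,\dots,O_8$, and vertex-continuity forces the exit gate of $O_k$ to coincide, as a point, with the entrance gate of $O_{k+1}$: if the two limit points differed, then at sufficiently fine refinement the corresponding boundary cells of $O_k$ and $O_{k+1}$ would be disjoint, contradicting continuity. Each octant in turn carries a scaled copy of some rule with the same gluing property. The essential observation is that, although a poly-Hilbert curve may use arbitrarily many distinct rules, only a \emph{bounded} amount of local information is relevant: the type and position of each gate (vertex, interior of an edge, or interior of a face) and the adjacency pattern of consecutively visited octants. These are finitely constrained by exactly the local geometry classified in the appendix (Lemmas~\ref{lem:vertex-vertex-gates}--\ref{lem:edge-face-gates}), together with the requirement that the gate where two octants meet be a common point of both.

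Then I would apply the reach bound inside octants to force a \emph{sparse} curve section. Within an octant (a cube of side $1/2$), reaching a point at $L_\infty$-distance near $1/2$ from its entrance gate already costs volume exceeding $20(1/2)^3/189$; by tracking how far $\gamma$ can be from the cube's entrance gate as a function of swept volume, and using that all eight octants — hence all eight corners — must eventually be covered, I would locate parameters $s<t$ with $t-s$ small but $\|\gamma(s)-\gamma(t)\|_\infty$ large, i.e.\ a section filling at most $20/189\approx 10.6\%$ of its enclosing cube. The plan is to show such a section is unavoidable by a finite case analysis over the admissible local patterns (gate types crossed with octant orders compatible with the gluing constraints): for each pattern I would exhibit an explicit witness pair $p,q$, evaluate $\delta_\infty(p,q)^3/\vol(C(p,q))$, and verify it is at least $189/20$, with the minimum over all patterns attained by the configuration realised by the Neptunus curve.

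The hard part will be making this case analysis simultaneously finite and exhaustive. Since the number of rules is unbounded, the family of poly-Hilbert curves is infinite, so the reduction to finitely many \emph{local} configurations — valid even for the non-vertex-gated and non-order-preserving curves the theorem must cover — is the central difficulty; one must show that only a bounded recursion depth and a bounded description of gate placement can affect the worst ratio. A second delicacy is that the constant $189/20$ is not visible at a single level: it emerges from the self-similar reach bound interacting with the best admissible placement of gates across (at least) two levels of subdivision, so the witness pair must be chosen at the right depth and the optimisation over gate positions carried out carefully. Obtaining exactly $189/20$, rather than a weaker bound such as the $8.25$ or conjectured $9$ of Niedermeier et al., is precisely where this two-level optimisation does the work.
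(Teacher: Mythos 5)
Your proposal is a plan, not a proof, and its central step fails. The reduction to ``finitely many local configurations'' leans on Lemmas~\ref{lem:vertex-vertex-gates}--\ref{lem:edge-face-gates}, but those lemmas are stated and proved only for \emph{mono}-Hilbert curves: their arguments repeatedly use that the first octant carries a scaled copy of the \emph{same} rule, so that the unit cube's gate must satisfy a self-similarity equation such as $a = \frac12 - \frac12 a$. For a poly-Hilbert curve each octant may invoke a different rule, gate positions are fixpoints of equations coupling an unbounded rule system, and the set of achievable gate locations is infinite (the paper's own Neptunus, Luna and Iupiter curves already use gate placements outside the mono-Hilbert classification). Hence your case analysis over ``gate types crossed with octant orders'' is not finite, and you concede this (``the hard part will be making this case analysis simultaneously finite and exhaustive'') without resolving it --- yet that is exactly the content the theorem requires, since poly-Hilbert curves form an infinite family. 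Moreover, nothing in your sketch actually produces the constant $189/20$; you only assert that it will ``emerge'' from a two-level optimisation over gate placements, with no computation and no witness configuration.

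The paper closes this gap by an argument that never mentions gate positions at all, which is precisely what makes it uniform over all rule systems. It first shows that a curve fills at most $1/2 + \frac18\cdot\frac12 + \frac1{8^2}\cdot\frac12 + \cdots = 4/7$ of a cell before having touched all of the cell's facets. From this it derives purely octant-combinatorial constraints on any curve with $\WLMax \leq 9.45$: the \emph{always-turn} rule (three consecutive cells fit in a box of twice a cell's width per dimension, else $\WLMax \geq 3^3/(4/7+1+4/7) = 12.6$), whence $|\tail(C_i)|+|\head(C_i)|\leq 6$ for every cell, and the \emph{stick-six} rule $|\head(C_i)|+|\tail(C_{i+1})|\geq 6$ (else $\WLMax \geq 4^3/(4/7+5+4/7) \approx 10.42$). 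These force a regular section in which head/tail sizes are $(2,4)$, $(3,3)$ or $(4,2)$ at every level; a two-level composition argument shows the $(3,3)$ pattern cannot persist below the top level, so some head or tail has exactly two octants, and that unavoidable configuration realises an $L_\infty$-distance of $3$ over a curve section of volume $2 + 6/8 + 6/8^2 + \cdots = 20/7$, giving exactly $3^3/(20/7) = 189/20$. To salvage your route you would need to replace the gate classification by a scale-invariant combinatorial invariant of this kind; as written, the proposal does not prove the theorem.
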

\begin{proof}
To prove the theorem, we will analyse what properties a curve with $\WLMax \leq 9.45$ should have.

First observe that, after entering any cell, the curve traverses at most $1/2 + 1/8\cdot 1/2 + 1/8^2 \cdot 1/2 + ... = 4/7$ of it before having hit all sides of the cell. Hence, the bounding box of any three consecutive cells of the grid, at any level of recursion, should have size at most twice a cell's width in each dimension (we call this the \emph{always-turn} rule), otherwise the curve would be subject to a \WLMax lower bound of $3^3 / (4/7 + 1 + 4/7) = 63 / 5 = 12.6$ (see Figure~\ref{fig:WLMaxlbd}(a)).

\begin{figure}
\centering
\includegraphics[width=\hsize]{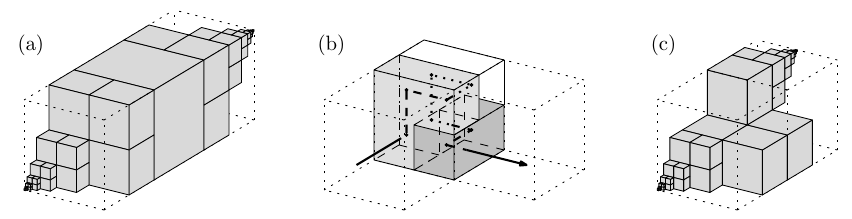}
\caption{%
(a) If the bounding box of three consecutive cells is as long as three times a cell's width, then a $L_\infty$-distance of 3 cell widths would be realized by a curve section filling at most 4/7 + 1 + 4/7 cells.\quad
(b) A cell $C_i$ with its predecessor and successor in the scanning order, the octants that constitute its head, and the octants that constitute its tail.\quad
(c) If one cell's head and the next cell's tail occupy only five octants, a $L_\infty$-distance of 4 octant widths would be realized by a curve section filling at most 4/7 + 5 + 4/7 octants.
}
\label{fig:WLMaxlbd}
\end{figure}

Consider a curve traversing the cells $C_1,...,C_k$ of a grid of $k$ cubes of equal size. For $1 \leq i < k$, let $h_i$ be an axis-parallel plane that separates $C_i$ from $C_{i+1}$. Define $\tail_{h_i}(C_i)$ as the set of octants of $C_i$ that appear along the curve before the first octant that does not touch $h_{i-1}$. Symmetrically, define $\head_{h_i}(C_i)$ as the set of octants of $C_i$ that appear along the curve after the last octant that does not touch $h_i$. (When it is not relevant which plane is taken for $h_i$, or when it is unique, we may omit the subscripts.) Note that as a result of the \emph{always-turn} rule, $h_{i+1}$ can never be parallel to $h_i$. This implies that every cell $C_i$ contains at least two octants that touch neither $h_i$ nor $h_{i+1}$; hence $|\tail(C_i)| + |\head(C_i)| \leq 6$ (see Figure~\ref{fig:WLMaxlbd}(b)).

For $\WLMax \leq 9.45$ we need $|\head(C_i)| + |\tail(C_{i+1})| \geq 6$ for all $1 \leq i < k$ (we call this the \emph{stick-six} rule), because if $|\head(C_i)| + |\tail(C_{i+1})| \leq 5$, the section of the curve that traverses the octant of $C_i$ before $\head(C_i)$, the octants in $\head(C_i)$ and $\tail(C_{i+1})$, and the octant of $C_{i+1}$ after $\tail(C_{i+1})$, results in a $\WLMax$ lower bound of $4^3 / (4/7 + 5 + 4/7) = 448 / 43 \approx 10.42$ (see Figure~\ref{fig:WLMaxlbd}(c)). Together with the observation that $|\tail(C_i)| + |\head(C_i)| \leq 6$, this implies that there is a section of the curve (we call this a \emph{regular} section) where for each level of recursion it holds that either all heads contain two octants and all tails contain four, all heads and all tails contain three octants, or all heads contain four octants and all tails contain two.

We will now argue that in a regular section, consecutive cells always share a face. For the sake of contradiction, suppose there are two consecutive cells $C_i$ and $C_{i+1}$ that do not share a face. Let $f$ and $g$ be two axis-parallel planes that separate these cells. Note that the conditions on the sizes of heads and tails derived above must hold regardless of whether $f$ or $g$ is taken to be $h_i$. Assume we have $|\head_f(C_i)| \geq 3$ (if not, the proof goes through in a symmetric way, starting from $|\tail_f(C_{i+1})| \geq 3$)
. Then at least one octant of $\head_f(C_i)$ is not adjacent to $g$, so we have $|\head_g(C_i)| = 2$ and $|\tail_g(C_{i+1})| = 4$; then exactly two octants of $\tail_g(C_{i+1})$ are adjacent to $f$ so we get $|\tail_f(C_{i+1}| = 2$ and $|\head_f(C_i)| = 4$. Because $|\head_f(C_i)| = 4$ we have $|\tail(C_i)| = 2$ (regardless of $h_{i-1}$), and because $|\tail_g(C_{i+1})| = 4$ we have $|\head(C_{i+1})| = 2$ (regardless of $h_{i+1}$). But in a regular section, either the heads or the tails must be bigger than two octants. So in regular sections, consecutive cells always share a face.

\begin{figure}
\centering
\includegraphics[width=\hsize]{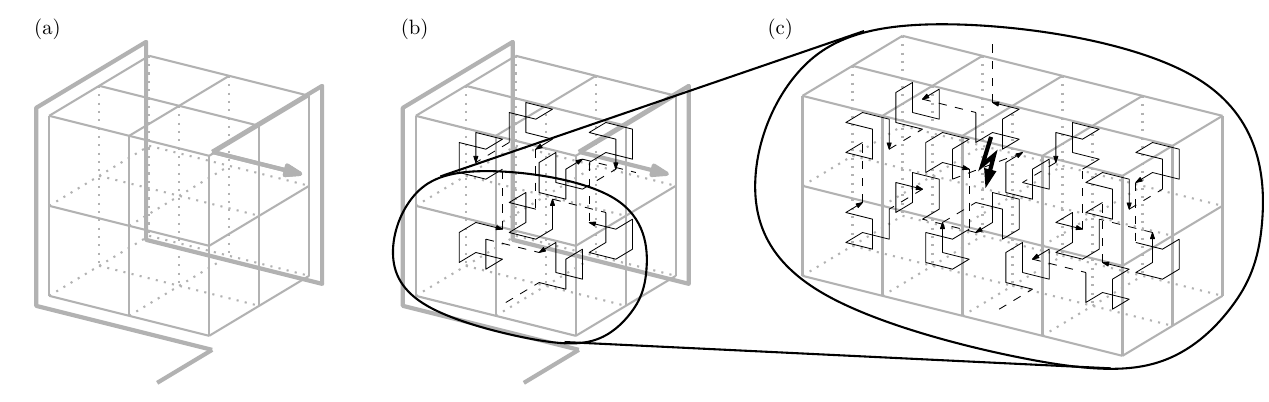}
\caption{%
(a) The base pattern for heads and tails of three octants.\quad
(b) The only way in which eight of these patterns can be combined to traverse a block of eight cells.\quad
(c) Combining these traverals recursively would result in mismatched gates.}
\label{fig:WLMaxlbd2}
\end{figure}

Consider composing a traversal of a grid of $2^3$ cells, with $4^3$ octants in total, such that in each cell, all heads and tails contain three octants. Then the traversal of the octants in each cell must be an appropriately rotated and/or reflected version of the pattern shown in Figure~\ref{fig:WLMaxlbd2}(a), and such patterns can be composed to traverse all $4^3$ octants in only one way (modulo rotations and reflections), see Figure~\ref{fig:WLMaxlbd2}(b), resulting in a larger pattern of which the head and the tail contain three octants. However, these cannot be combined again into a traversal of a grid of size $8^3$: at the marked spot in Figure~\ref{fig:WLMaxlbd2}(c), consecutive cells would not share a face. Hence, from the second level of recursion down, no regular section can contain heads and tails of three octants, and either the heads or the tails contain only two octants.

\begin{figure}
\centering
\includegraphics[width=0.8\hsize]{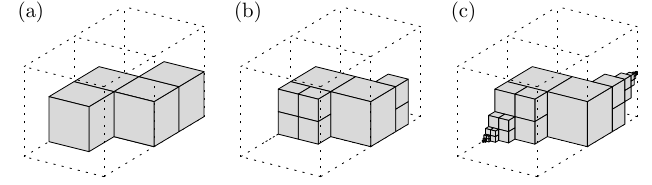}
\caption{An example of the unavoidable configuration that results in a \WLMax lower bound of 9.45.}
\label{fig:WLMaxlbd3}
\end{figure}

Now consider such a head (or tail) of two octants, together with the previous and the next octant in the scanning order (Figure~\ref{fig:WLMaxlbd3}(a)). In such a section of the curve, a $L_\infty$-distance of 3 is realized by the two central octants, plus the head of the first octant and the tail of the second octant, together occupying six suboctants (see Figure~\ref{fig:WLMaxlbd3}(b)), plus the preceding head and the following tail within the suboctants (See Figure~\ref{fig:WLMaxlbd3}(c)), etc. This adds up to $2 + 6/8^1 + 6/8^2 + ... = 20/7$ octants. Thus $\WLMax$ must be at least $3^3 / (20/7) = 189/20 = 9.45$, and no poly-Hilbert curve can have $\WLMax < 189/20 = 9.45$.
\end{proof}

\clearpage

\section{Update and erratum}\label{apx:update}

This manuscript discusses space-filling curves that recursively traverse a cube octant by octant. If the curve within each octant is similar (by scaling, translation, reflection, rotation and/or reversal) to the complete curve, then we call the curve self-similar. This manuscript discusses both self-similar and non-self-similar curves.

\subsection*{If you are interested in self-similar curves}
You may prefer my more recent manuscript ``How many three-dimensional Hilbert curves are there?'' (\url{http://arxiv.org/abs/1610.00155}). The new manuscript:\begin{itemize}
\item offers a more extensive account of properties that may characterize octant-wise self-similar cube-filling curves, supported by references to recent and older literature;
\item supports more observations by proofs, including a proof of the minimal defining properties of the two-dimensional Hilbert curve;
\item contains a much more structured naming scheme which allows us to recognize more symmetries and similarities between the curves;
\item contains a full list of octant-wise self-similar cube-filling curves (made possible by the new naming scheme, which eliminates the need for thousands of figures);
\item highlights and illustrates many more interesting curves;
\item comes with software;
\item includes a limited discussion of higher-dimensional curves;
\item is submitted to a peer-reviewed journal.
\end{itemize}
Some of the concepts defined in the old manuscript (the one you are reading) have been reconsidered, adapted and/or renamed in the new manuscript, mostly as a result of insights developing during more recent work. In particular:\begin{itemize}
\item the definition of \emph{symmetry} has been made more precise;
\item \emph{binary decomposability} has been replaced by the slightly less strict concept of \emph{metasymmetry};
\item \emph{downward compatibility} has been replaced by more general definitions of \emph{harmony};
\item \emph{vertex-continuity} is simply called \emph{continuity} in the new manuscript;
\item \emph{edge-continuity} is not discussed in the new manuscript;
\item the notion of being \emph{non-reflecting} is not discussed in the new manuscript;
\item \emph{bend size} has been replaced by \emph{hyperorthogonality}.
\end{itemize}
Likewise, in the new manuscript I have chosen different examples and have not reproduced all numbers reported in the tables in the old manuscript. If you see any value in any of the old definitions, numbers, or examples that have been omitted or modified in the new manuscript, please let me know and I will consider including them in the new manuscript and giving them a proper treatment after all.

In any case I advice against using the old manuscript's curve naming scheme (which was not even completely described). The new manuscript describes a much better way to give names to the curves.

\subsection*{If you are interested in non-self-similar curves (poly-Hilbert curves)}
The old manuscript (the one you are reading) is still relevant. You may want to consult the new manuscript for the better background, notation and definitions, but the new manuscript does not include any results on poly-Hilbert curves. For the Neptunus, Luna and Iupiter curves and the proof that the first has optimal $L_\infty$-dilation, the document you are currently reading is still the only source.

\subsection*{Erratum}
On page~7, I claimed that for $i = \infty$ and $i = 1$, the $L_i$-diameter ratio and the $L_i$-bounding ball ratio are always equal, modulo a fixed constant factor, and I based this on a claim in Footnote~6, that the diameter of the minimum bounding $L_i$-ball of any set $S$ is equal to the $L_i$-diameter of $S$. For $i = \infty$ the footnote is correct but for $i = 1$ it is not, at least not in three or more dimensions (consider the set of points $(0,0,0)$, $(0,1,1)$, $(1,0,1)$ and $(1,1,0)$). Consequently I have no proof for my claim that the $L_1$-diameter ratio and the $L_1$-bounding ball ratio are always equal modulo a fixed constant factor.

\addvspace\baselineskip
\raggedleft Herman Haverkort, 1 October 2016

\end{document}